\newif\ifArxiv
\Arxivtrue

\newif\ifShowTodo
\ShowTodofalse

\newif\ifOmitnot
\Omitnottrue

\ifArxiv
\documentclass[11pt,a4paper]{article}
\else
\documentclass{sig-alternate-05-2015}
\fi

\usepackage[english]{babel}  
\usepackage[utf8]{inputenc}
\usepackage{graphicx}
\usepackage{amssymb}
\usepackage{amsmath}
\usepackage{xspace}          
\usepackage{varioref}        
\usepackage{subfigure}
\usepackage{booktabs}
\usepackage{multirow}        
\usepackage{color}
\usepackage{bm}

\usepackage{algorithm}
\usepackage{algpseudocode}
\usepackage[final]{microtype}
\usepackage{hyperref}
\usepackage[sort,nocompress]{cite}

\ifArxiv
\usepackage[amsmath,hyperref,thmmarks]{ntheorem}
\else
\fi
\newtheorem{theorem}{Theorem}
\newtheorem{lemma}[theorem]{Lemma}
\newtheorem{corollary}[theorem]{Corollary}

\newtheorem{definition}[theorem]{Definition}

\newtheorem{problem}[theorem]{Problem}
\newtheorem{example}[theorem]{Example}
\ifArxiv
\theoremstyle{nonumberplain}
\theoremsymbol{\ensuremath{\square}}
\theorembodyfont{\normalfont\upshape}
\newtheorem{proof}{Proof}
\else
\fi

\usepackage{fancyref}

\makeatletter
\def\mkfancyprefix#1#2{%
\expandafter\def\csname fancyref#1labelprefix\endcsname{#1}%
\begingroup\def\x{\endgroup\frefformat{plain}}%
    \expandafter\x\csname fancyref#1labelprefix\endcsname
    {\MakeLowercase{#2}\fancyrefdefaultspacing##1}%
\begingroup\def\x{\endgroup\Frefformat{plain}}%
    \expandafter\x\csname fancyref#1labelprefix\endcsname
    {#2\fancyrefdefaultspacing##1}%
\begingroup\def\x{\endgroup\frefformat{vario}}%
    \expandafter\x\csname fancyref#1labelprefix\endcsname
    {\MakeLowercase{#2}\fancyrefdefaultspacing##1##3}%
\begingroup\def\x{\endgroup\Frefformat{vario}}%
    \expandafter\x\csname fancyref#1labelprefix\endcsname
    {#2\fancyrefdefaultspacing##1##3}%
}
\makeatother
\fancyrefchangeprefix{\fancyrefeqlabelprefix}{eqn}
\Frefformat{vario}{\fancyrefeqlabelprefix}{(#1)#3}

\mkfancyprefix{ssec}{Section}
\mkfancyprefix{tbl}{Table}
\mkfancyprefix{thm}{Theorem}
\mkfancyprefix{lem}{Lemma}
\mkfancyprefix{cor}{Corollary}
\mkfancyprefix{prop}{Proposition}
\mkfancyprefix{prob}{Problem}
\mkfancyprefix{alg}{Algorithm}
\mkfancyprefix{inv}{Invariant}
\mkfancyprefix{ex}{Example}
\mkfancyprefix{line}{Line}
\mkfancyprefix{def}{Definition}
\mkfancyprefix{itm}{Item}
\mkfancyprefix{app}{Appendix}
\newcommand{\cref}[1]{\Fref{#1}}

\algrenewcommand\alglinenumber[1]{{\scriptsize#1}}   
\algrenewcommand\algorithmicrequire{\textbf{Input:}} 
\algrenewcommand\algorithmicensure{\textbf{Output:}} 
\newcommand{\ass}{\leftarrow}
\newcommand{\return}{\textbf{return }}


\definecolor{orange}{rgb}{1,0.5,0}
\ifShowTodo
\newcommand{\todo}[1]{\textcolor{red}{[\textit{#1}]}\xspace}
\newcommand{\arne}[1]{\textcolor{blue}{[\textit{#1}]}\xspace}
\newcommand{\jsrn}[1]{\textcolor{orange}{[\textit{#1}]}\xspace}
\else
\newcommand{\todo}[1]{}
\newcommand{\arne}[1]{}
\newcommand{\jsrn}[1]{}
\fi

\newcommand{\Pade}{Pad\'e\xspace}                
\ifArxiv
\newcommand{\PADE}{\Pade}
\else
\newcommand{\PADE}{PAD\'E\xspace}
\fi

\newcommand{\hlineSpace}[1]{%
  \hline%
  \\[\dimexpr-\normalbaselineskip+#1]%
  }

\newcommand{\word}[1]{\textnormal{#1}}

\newcommand{\mo}{{-1}}                   
\newcommand{\Transp}{^\top}
\newcommand{\adj}{\word{adj}}

\renewcommand\vec[1]{\bm{#1}}            

\newcommand{\floor}[1]{\lfloor #1 \rfloor}
\newcommand{\ceil}[1]{\lceil #1 \rceil}

\newcommand{\diagg}[1][\null]{%
  \ifx\null#1%
  \Gamma_{\vec g}%
  \else%
  \Gamma_{\vec g_{#1}}%
  \fi}

\newcommand\FF[1]{\mathbb F_{#1}\xspace}
\newcommand\ZZ{\mathbb Z\xspace}

\newcommand{\K}{{\mathsf{K}}}

\newcommand{\rem}{\word{rem}}
\newcommand{\row}{\word{row}}
\newcommand{\rowdeg}{\word{rowdeg}}

\newcommand{\Row}{\word{Row}}

\newcommand\Osoft{O^{\scriptscriptstyle \sim}\!}

\newcommand{\M}{{\mathsf{M}}}
\DeclareMathOperator{\loglog}{loglog}

\newcommand{\algoname}[1]{\ensuremath{\mathsf{#1}}\xspace}

\newcommand{\MinBasis}{\algoname{MinBasis}}
\newcommand{\PopovBasis}{\algoname{PopovBasis}}
\newcommand{\NegMinBasis}{\algoname{NegMinBasis}}

\ifArxiv
\else
\setcopyright{acmlicensed}
\conferenceinfo{ISSAC '16,}{July 19 - 22, 2016, Waterloo, ON, Canada}
\isbn{978-1-4503-4380-0/16/07}\acmPrice{\$15.00}
\doi{http://dx.doi.org/10.1145/2930889.2930933}
\clubpenalty=10000
\widowpenalty = 10000
\fi

\title{Algorithms for Simultaneous \Pade Approximations%
  \ifArxiv%
  \footnote{
    $\copyright$ Johan Rosenkilde, Arne Storjohann. This is the authors'
    version of the work. It is posted here for your personal use. Not for
    redistribution. The definitive version was published
    in ISSAC '16, http://dx.doi.org/10.1145/2930889.2930933.
  }
}
\ifArxiv%
  \author{%
    Johan Rosenkilde, né Nielsen\\
    \small Technical University of Denmark\\
    \small Denmark\\
    \small jsrn@jsrn.dk
    \and
    Arne Storjohann \\
    \small University of Waterloo \\
    \small Canada \\
    \small astorjoh@uwaterloo.ca
    }
    \date{}
\else
\numberofauthors{2}
\author{
\alignauthor
Johan Rosenkilde, né Nielsen \\
       \affaddr{Technical University of Denmark}\\
       \affaddr{Denmark}\\
       \email{jsrn@jsrn.dk}
\alignauthor
Arne Storjohann \\
       \affaddr{University of Waterloo}\\
       \affaddr{Canada}\\
       \email{astorjoh@uwaterloo.ca}
}
\fi

\begin{document}

\maketitle

\begin{abstract}
  We describe how to solve simultaneous \Pade approximations over a power series ring $\K[[x]]$ for a field $\K$ using $\Osoft(n^{\omega - 1} d)$ operations in $\K$, where $d$ is the sought precision and $n$ is the number of power series to approximate.
  We develop two algorithms using different approaches.
Both algorithms return a reduced sub-bases that generates the complete
set of solutions to the input approximations problem that satisfy 
the given degree constraints.
  Our results are made possible by recent breakthroughs in fast computations of minimal approximant bases and Hermite \Pade approximations.
\end{abstract}

\section{Introduction}
\label{sec:intro}

\todo{Be consistent about ``Simultaneous'' or ``simultaneous''}

The Simultaneous \Pade approximation problem concerns approximating
several power series $S_1,\ldots,S_n \in \K[[x]]$ with rational
functions $\frac {\sigma_1}\lambda,\ldots,\frac {\sigma_n} \lambda$,
all sharing the same denominator $\lambda$.  In other words, for
some $d \in \ZZ_{\geq 0}$, we seek $\lambda \in \K[x]$ of low degree such
that each of
\[
  \rem(\lambda S_1,\  x^d) , \rem(\lambda S_2,\  x^d),\ \ldots,\ \rem(\lambda S_n,\  x^d)
\]
has low degree.  The study of Simultaneous \Pade approximations
traces back to Hermite's proof of the transcendence of $e$
\cite{hermite_sur_1878}. Solving Simultaneous \Pade
approximations has numerous applications, such as in coding theory,
e.g.~\cite{feng_generalization_1991,schmidt_collaborative_2009};
or in distributed, reliable computation \cite{clement_pernet_high_2014}.
Many algorithms have been developed for this problem, see e.g.~\cite{beckermann_uniform_1992,olesh_vector_2006,sidorenko_linear_2011,nielsen_generalised_2013} as well as the references therein.
Usually one cares about the regime where $d \gg n$.
Obtaining $O(n d^2)$ is classical through successive cancellation, see \cite{beckermann_uniform_1994} or \cite{feng_generalization_1991} for a Berlekamp--Massey-type variant.
Using fast arithmetic, the previous best was $\Osoft(n^\omega d)$, where $\omega$ is the exponent for matrix multiplication, see \cref{ssec:cost}.
That can be done by computing a minimal approximant basis with e.g.~\cite{giorgi_complexity_2003,GuptaSarkarStorjohannValeriote11}; this approach traces back to \cite{barel_general_1992,beckermann_uniform_1992}.
Another possibility which achieves the same complexity is fast algorithms for solving structured linear systems, e.g.~\cite{bostan_solving_2008}; see \cite{chowdhury_faster_2015} for a discussion of this approach.

\jsrn{Should we make these considerations in this paper, or in a possible journal version?
\cite{olesh_vector_2006} presents an algorithm which is faster when $N_i$ all equal some $N$, and if $d > N + $
Note that for generic input, there will then be no such solutions.
The algorithm has complexity $\Osoft(n k^{\omega-1} d)$, where $k$
}
\arne{Maybe we can add something in the conclusions about this, e.g., 
For the special case of  Problem 1 when all the $N_i$ are equal to some $N$, and all the $g_i$ are equal, 
and $d > N+N/k$ for some $k \in \\ZZ_{>0}$, 
it can be shown that the solution basis has dimension bounded by $k$, and
the algorithm of \cite{olesh_vector_2006} achieves a running time of $\Osoft(n k^{\omega-1} d)$ provided $d > N+N/k$.
It should be possible to extend the algorithms in this paper to blah blah.  We will present
this is in a future paper.}

A common description is to require $\deg \lambda < N_0$ for some degree bound $N_0$, and similarly $\deg \rem(\lambda S_1,\, x^d) < N_i$ for $i = 1,\ldots,n$.
The degree bounds could arise naturally from the application, or could be set such that a solution must exist.
A natural generalisation is also to replace the $x^d$ moduli with arbitrary $g_1,\ldots,g_n \in \K[x]$.
Formally, for any field $\K$:
\begin{problem}
  \label{prob:sim_pade}
  Given a tuple $(\vec S, \vec g, \vec N)$ where 
\begin{itemize}
\item
$\vec S = (S_1,\ldots,S_n) \in \K[x]^n$ is a sequence of polynomials,
\item $\vec g = (g_1,\ldots,g_n) \in \K[x]^n$ is a sequence of moduli polynomials with $\deg S_i < \deg g_i$ for $i=1,\ldots,n$,
\item and $\vec N = (N_0,\ldots,N_n)
 \in \ZZ_{\geq 0}^{n+1}$ are degree bounds
satisfying $1\leq N_0 \leq \max_i \deg g_i$ and $N_i \leq \deg g_i$ for $i=1,\ldots,n$, 
\end{itemize}
find, if it exists, a non-zero vector $(\lambda, \phi_1, \ldots, \phi_n)$ such that
\begin{enumerate}
\item $\lambda S_i \equiv \phi_i \mod g_i$ for $i = 1,\ldots, n$, and \label{p1item1}
\item $\deg \lambda < N_0$ and $\deg \phi_i < N_i$ for  $i=1,\ldots,n$.
\end{enumerate}
\end{problem}
We will call any vector $(\lambda, \phi_1, \ldots, \phi_n)$ as above \emph{a solution} to a given Simultaneous \Pade approximation problem.
Note that if the $N_i$ are set too low, then it might be the case that no solution exists.
\begin{example}
  \label{ex:simpade}
  Consider over $\FF 2[x]$ that $g_1 = g_2 = g_3 = x^5$, and 
    $\vec S = (S_1,S_2,S_3) = 
      \left(x^{4} + x^{2} + 1,\,x^{4} + 1,\,x^{4} + x^{3} + 1\right)$,
  with degree bounds $\vec N = (5, 3, 4, 5)$.
  Then $\lambda_1 = x^4 + 1$ is a solution, since $\deg \lambda_1 < 5$ and
  \[
    \lambda_1 \vec S \equiv
      \left(x^{2} + 1,\ 1,\ x^{3} + 1\right)
      \mod x^5 \ .
  \]
  $\lambda_2 = x^{3} + x$ is another solution, since
  \[
    \lambda_2 \vec S \equiv
      \left(x,\ x^{3} + x,\ x^4+x^3 + x\right)
      \mod x^5 \ .
  \]
  These two solutions are linearly independent over $\FF 2[x]$ and span all solutions.
\end{example}
Several previous algorithms for solving \cref{prob:sim_pade} are
more ambitious and produce an entire \emph{basis} of solutions
that satisfy the first output condition $\lambda S_i \equiv \phi_i \mod g_i$
for $i=1,\ldots,n$,
including solutions that do not satisfy the degree bounds stipulated
by the second output condition.  Our algorithms are slightly more
restricted in that we  only return the sub-basis that generates
the set of solutions that satisfy both output requirements of
\cref{prob:sim_pade}.  
Formally:
\begin{problem}
  \label{prob:sim_pade_basis}
  Given an instance of \cref{prob:sim_pade}, find a matrix $A \in \K[x]^{* \times (n+1)}$ such that:
    \begin{itemize}
      \item Each row of $A$ is a solution to the instance.
      \item All solutions are in the $\K[x]$-row space of $A$.
      \item $A$ is $(-\vec N)$-row reduced\footnote{%
        The notions $(-\vec N)$-degree, $\deg_{-\vec N}$ and $(-\vec N)$-row reduced are recalled in \cref{sec:preliminaries}.}.
    \end{itemize}
\end{problem}
The last condition ensures that $A$ is minimal, in a sense, according to the degree bounds $\vec N$, and that we can easily parametrise which linear combinations of 
the rows of $A$ are solutions.
We recall the relevant definitions and lemmas in \cref{sec:preliminaries}.

We will call such a matrix $A$ a \emph{solution basis}.  
In the complexities we report here, we cannot afford to compute 
$A$ explicitly.  For example, if all $g_i = x^d$,
the number of field elements required to explicitly
write down all of the entries of $A$ could be $\Omega(n^2d)$.
Instead, we remark that $A$ is completely given by
the problem instance as well as the first column of $A$, containing
the $\lambda$ polynomials.\footnote{%
  The restriction $N_i \leq \deg g_i$ in \cref{prob:sim_pade} ensures
  that for a given $\lambda$, the only possibilities for the $\phi_i$
  in a solution are $\rem(\lambda S_i, \ g_i)$.  In particular, if
  we allowed $N_i > \deg g_i$ then $(0,\ldots, 0, g_i, 0, \ldots,
  0)$ would be a solution which can not be directly reconstructed
  from its first element.
}
Our algorithms will therefore represent $A$ row-wise using the
following compact representation.  
\begin{definition}
For a given instance of \cref{prob:sim_pade_basis}, a \emph{solution
specification} is a tuple $(\vec \lambda,\vec \delta) \in
\K[x]^{k \times 1} \times \ZZ_{<0}^k$ such that the \emph{completion} of $\vec
\lambda$ is a solution basis, and where $\vec \delta$ are the $(-\vec N)$-degrees of the
rows of $A$.

The \emph{completion} of $\vec \lambda = (\lambda_1,\ldots,\lambda_k)\Transp$ is
the matrix
\[
  \begin{bmatrix}
    \lambda_1 & \rem(\lambda_1 S_1,\ g_1) & \ldots & \rem(\lambda_1 S_n,\ g_n) \\
    \vdots & & \ddots & \vdots \\
    \lambda_k & \rem(\lambda_k S_1,\ g_1) & \ldots & \rem(\lambda_k S_n,\ g_n) \\
  \end{bmatrix}
  \ .
\]
\end{definition}
Note that $\vec \delta$ will consist of only negative numbers, since any solution $\vec v$ by definition has $\deg_{-\vec N} \vec v < 0$.

\begin{example}
  A solution specification for the problem in \cref{ex:simpade} is
  \[
    (\vec \lambda, \vec \delta) = \big( [x^4 + 1,\ x^3 + x]\Transp ,\ (-1, -1) \big)  \ .
  \]
  The completion of this is
  \[
    A = \begin{bmatrix}
           x^4 + 1 & x^{2} + 1 & 1 & x^{3} + 1 \\
           x^3 + x & x  & x^{3} + x & x^4+x^3 + x
         \end{bmatrix}
  \]
  One can verify that $A$ is $(-\vec N)$-row reduced.
\end{example}

We present two algorithms for solving \cref{prob:sim_pade_basis},
both with complexity $O\big(n^{\omega-1}\, \M(d)\,(\log d)\,(\log
d/n)^2\big)$, where $d = \max_i \deg g_i$ and $\M(d)$ is the cost of multiplying two polynomials of degree $d$, see \cref{ssec:cost}.
They both depend crucially on recent developments
that allow computing minimal approximant bases of non-square matrices
faster than for the square case
\cite{zhou_efficient_2012,jeannerod_computation_2016}.
We remark that from the solution basis, one can also compute the expanded form of
one or a few of the solutions in the same complexity, for instance if a single, expanded solution to the simultaneous \Pade problem is needed.

Our first algorithm in \cref{sec:dual} assumes $g_i = x^d$ for all
$i$ and some $d \in \ZZ_{\geq 0}$.  It utilises a well-known duality between
Simultaneous \Pade approximations and Hermite \Pade approximations,
see e.g.~\cite{beckermann_uniform_1992}.  The Hermite \Pade problem
is immediately solvable by fast minimal approximant basis computation.
A remaining step is to efficiently compute a single row of the
adjoint of a matrix in Popov form, and this is done by combining
partial linearisation \cite{GuptaSarkarStorjohannValeriote11} and high-order
lifting \cite{storjohann_high-order_2003}.

Our second algorithm in \cref{sec:intersect} supports arbitrary $g_i$.
The algorithm first solves $n$ single-sequence \Pade approximations, each of $S_1,\ldots,S_n$.
The solution bases for two problem instances can be combined by computing the
intersection of their row spaces; this is handled by a minimal approximant basis
computation.
A solution basis of the full Simultaneous \Pade problem is then obtained by structuring intersections along a binary tree.

Before we describe our algorithms, we give some preliminary notation and definitions in \cref{sec:preliminaries}, and in \cref{sec:subroutines} we describe some of the computational tools that we employ.

Both our algorithms have been implemented in Sage v. 7.0 \cite{stein_sagemath_????} (though asymptotically slower alternatives to the computational tools are used).
The source code can be downloaded from \url{http://jsrn.dk/code-for-articles}.

\subsection{Cost model}
\label{ssec:cost}

We count basic arithmetic operations in $\K$ on an algebraic RAM.
We will state complexity results in terms of an exponent $\omega$
for matrix multiplication, and a function $\M(\cdot)$ that is a
 multiplication time for
$\K[x]$ \cite[Definition~8.26]{von_zur_gathen_modern_2012}.  Then two
$n\times n$ matrices over $\K$ can be multiplied in $O(n^{\omega})$
operations in $\K$, and two polynomials in $\K[x]$ of degree
strictly less than $d$ can be multiplied in $\M(d)$ operations in 
$\K$.  The best known algorithms allow $\omega < 2.38$
\cite{coppersmith_matrix_1990, LeGall14}, and we can always take
$\M(d) \in O(n (\log n) (\loglog n))$ \cite{CantorKaltofen}.  

In
this paper we assume that $\omega > 2$, and that $\M(d)$ is super-linear while
$\M(d) \in O(d^{\omega-1})$.  The assumption $\M(d) \in O(d^{\omega-1})$ simply
stipulates that if fast matrix multiplication techiques are used
then fast polynomial multiplication should be used also:
for example, $n \, \M(nd) \in O(n^{\omega} \, \M(d))$.

\section{Preliminaries}
\label{sec:preliminaries}
Here we gather together some definitions and results regarding row
reduced bases, minimal approximant basis, and their shifted variants.
For a matrix $A$ we denote by $A_{i,j}$ the entry in row $i$ and
column $j$.  For a matrix $A$ over $\K[x]$ we denote by $\Row(A)$
the $\K[x]$-linear row space of $A$.



\subsection{Degrees and shifted degrees}

The degree of a nonzero vector $\vec v \in \K[x]^{1 \times m}$ or
matrix $A \in \K[x]^{n\times m}$ is denoted by
$\deg \vec v$ or $\deg A$, and is the maximal degree of entries of
$\vec v$ or $A$.  If $A$ has no zero rows the {\em row degrees} of $A$, denoted
by $\rowdeg\, A$, is the tuple $(d_1,\ldots,d_n)$ with $d_i = \deg
\row(A,i)$.

The (row-wise) {\em leading matrix} of $A$, denoted by 
${\rm LM}(A) \in \K^{n \times m}$, has ${\rm LM}(A)_{i,j}$ 
equal to the coefficient of $x^{d_i}$ of $A_{i,j}$.

Next we recall~\cite{barel_general_1992,zhou_efficient_2012,jeannerod_computation_2016}
the shifted variants of the notion of degree, row degrees, and leading
matrix.  For a {\em shift} $\vec s =
(s_1,\ldots,s_n) \in \ZZ^n$, 
define the $n \times n$ diagonal matrix $x^{\vec s}$ 
by $$x^{\vec s} := \left [ \begin{array}{ccc} x^{s_1}
& & \\
 &  \ddots & \\ & & x^{s_n} \end{array} \right ].$$
Then the {\em ${\vec s}$-degree} of $v$, the {\em ${\vec s}$-row
degrees} of $A$, and the {\em $\vec s$-leading matrix} of $A$, are
defined by \vec $\deg_{\vec s} v := \deg v x^{\vec s}$, $\rowdeg_{\vec
s} A := \rowdeg\, Ax^{\vec s}$, and ${\rm LM}_{\vec s}(A) := {\rm
LM}(Ax^{\vec s})$.  
Note that we pass over the ring of Laurent polynomials only for
convenience; our algorithms will only compute with polynomials.
As pointed out in~\cite{jeannerod_computation_2016}, up to negation
the definition of ${\vec s}$-degree is equivalent to that used
in~\cite{BeckermannLabahnVillard06} and to the notion of {\em defect}
in~\cite{beckermann_uniform_1994}.

For an instance $(\vec S, \vec g, \vec N)$ of \cref{prob:sim_pade}, in the
context of defining matrices, we will be using $\vec S$ and $\vec g$ as
vectors, and by $\diagg$ denote the diagonal matrix with the entries of
$\vec g$ on its diagonal.

\subsection{Row reduced}

Although row reducedness can be defined for matrices of arbitrary
shape and rank, it suffices here to consider the case of matrices
of full row rank.  A matrix $R \in \K[x]^{n \times m}$ is
{\em row reduced} if ${\rm LM}(R)$ has full row rank, and
{\em $\vec{s}$-row reduced} if ${\rm LM}_{\vec s}(R)$ has full row rank.
Every $A \in \K[x]^{n \times m}$ of full row rank is left equivalent
to a matrix $R \in \K[x]^{n \times m}$ that is ${\vec s}$-row
reduced.  The rows of $R$ give a basis for $\Row(A)$ that is minimal
in the following sense: the list of ${\vec s}$-degrees of the rows
of $R$, when sorted in non-decreasing order, will be lexicographically
minimal.  An important feature of row reduced matrices is 
the so-called ``predictable degree''-property~\cite[Theorem~6.3-13]{kailath_linear_1980}: for any 
$\vec v \in \K[x]^{1 \times n}$, we have
  \[
    \deg_{\vec s}(\vec v R) = \max_{i=1,\ldots,n}( \deg_{\vec s} {\rm row}(R,i)
+ \deg v_i ) \ .
  \]

A canonical $\vec{s}$-reduced basis is provided by the ${\vec
s}$-Popov form.  Although an ${\vec s}$-Popov form can be defined
for a matrix of arbitrary shape and rank, it suffices 
here to consider the case of a non-singular matrix.  The
following definition is equivalent
to~\cite[Definition~1.2]{jeannerod_computation_2016}.

\begin{definition}  \label{def:popov}
A non-singular matrix $R \in \K[x]^{n\times n}$ is in ${\vec s}$-Popov
form if ${\rm LM}_{\vec s}(R)$ is unit lower triangular and the
degrees of off-diagonal entries of $R$ are strictly less than the
degree of the diagonal entry in the same column.
\end{definition}


\subsection{Adjoints of row reduced matrices}

For a non-singular matrix $A$ recall that the adjoint of $A$, denoted
by ${\rm adj}(A)$, is equal to $(\det A)A^{-1}$, and that entry
${\adj(A)}_{i,j}\Transp$ is equal to $(-1)^{i+j}$ times the determinant
of the $(n-1) \times (n-1)$ sub-matrix that is obtained from $A$ by
deleting row $i$ and column $j$.


\begin{lemma}
  \label{lem:adjointRowReduced}
  Let $A \in \K[x]^{n \times n}$ be $\vec s$-row reduced.  Then 
$\adj(A)\Transp$ is $(-\vec s)$-row reduced with
\[
  \rowdeg_{(-\vec s)} \adj(A)\Transp =(\eta - s - \eta_1,\ldots , \eta - s -\eta_n) \ ,
\]
where $\vec \eta = \rowdeg_{\vec s} A$, $\eta = \sum_i \eta_i$ and $s = \sum_i s_i$.
\end{lemma}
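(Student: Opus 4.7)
\emph{Plan.} The plan is to reduce the claim to the special case $\vec s = \vec 0$ and then compute the leading matrix of $\adj(R)\Transp$ for a row-reduced $R$. Since the statement is invariant under translating $\vec s$ by a constant vector---both $\rowdeg_{-\vec s}\adj(A)\Transp$ and the quantity $\eta - s - \eta_j$ shift by the same amount under $\vec s \mapsto \vec s + c\vec 1$---we may assume $\vec s \geq \vec 0$, so that $R := A x^{\vec s} \in \K[x]^{n\times n}$ is polynomial and row-reduced (in the usual sense) with $\rowdeg R = \vec \eta$. Using $\adj(BC) = \adj(C)\adj(B)$ and $\adj(x^{-\vec s}) = x^{-s} x^{\vec s}$, one computes
\[
  \adj(A)\Transp = x^{-s}\,\adj(R)\Transp\,x^{\vec s},
\]
so $\adj(A)\Transp\,x^{-\vec s} = x^{-s}\,\adj(R)\Transp$. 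Consequently the $(-\vec s)$-leading matrix of $\adj(A)\Transp$ equals ${\rm LM}(\adj(R)\Transp)$ and $\rowdeg_{-\vec s}\adj(A)\Transp = \rowdeg\adj(R)\Transp - (s,\dots,s)$. It thus suffices to prove the unshifted version: $\adj(R)\Transp$ is row-reduced with row degrees $(\eta - \eta_j)_j$.

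For the degree equalities, the entry $\adj(R)\Transp_{j,i} = \pm\det R^{(j,i)}$ is a minor obtained by deleting row $j$ and column $i$ of $R$; since its remaining rows have degrees $(\eta_k)_{k\neq j}$, we get $\deg\adj(R)\Transp_{j,i} \leq \eta - \eta_j$ uniformly in $i$, bounding $\deg\row(\adj(R)\Transp, j)$ from above. The matching lower bound comes from the $(j,j)$-entry of $R\,\adj(R) = \det(R)\,I$: it reads $\sum_\ell R_{j,\ell}\adj(R)_{\ell,j} = \det R$, whose degree is $\eta$ (a well-known feature of row-reduced matrices), so combined with $\deg R_{j,\ell}\leq\eta_j$, some $\adj(R)_{\ell,j}$ must have degree $\geq \eta-\eta_j$. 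Hence $\rowdeg\adj(R)\Transp = (\eta-\eta_j)_j$.

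It remains to show that ${\rm LM}(\adj(R)\Transp)$ is non-singular. Writing $\bar R := {\rm LM}(R)$ and $C_{j,i} := [x^{\eta-\eta_j}]\adj(R)_{i,j}$, so that $C = {\rm LM}(\adj(R)\Transp)$, I would extract the coefficient of $x^{\eta+\eta_i-\eta_j}$ from the identity $\sum_\ell R_{i,\ell}\adj(R)_{\ell,j} = \det(R)\,\delta_{i,j}$. The tight degree bounds $\deg R_{i,\ell}\leq\eta_i$ and $\deg\adj(R)_{\ell,j}\leq \eta - \eta_j$ force exactly one pair of leading coefficients to contribute, so the left side yields $\sum_\ell \bar R_{i,\ell} C_{j,\ell}$ while the right side yields $\det(\bar R)\,\delta_{i,j}$. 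In matrix form, $\bar R\,C\Transp = \det(\bar R)\,I$, so $C = \adj(\bar R)\Transp$, which is non-singular since $\bar R$ is. The main subtle point is precisely this choice of exponent $\eta+\eta_i-\eta_j$ rather than the naive $\eta$: for $i\neq j$ with $\eta_i > \eta_j$, extracting at $x^\eta$ mixes many leading and sub-leading coefficients, whereas the shifted exponent cleanly isolates the single leading-coefficient product that yields the matrix identity above.
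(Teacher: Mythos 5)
Your proof is correct and follows the same overall route as the paper's: pass to $R := A x^{\vec s}$ (the paper allows Laurent entries, you normalize to $\vec s\geq\vec 0$ first), bound the entries of $\adj(R)\Transp$ above via minors, obtain the matching lower bound from the adjoint identity $R\,\adj(R)=\det(R)\,I$, and then conclude row-reducedness and untwist the shift. The one genuine addition is your explicit computation that $\mathrm{LM}(\adj(R)\Transp)=\adj(\mathrm{LM}(R))\Transp$, obtained by extracting the coefficient of $x^{\eta+\eta_i-\eta_j}$ from $\sum_\ell R_{i,\ell}\,\adj(R)_{\ell,j}=\det(R)\,\delta_{i,j}$: the paper simply asserts ``it follows that the $\ldots$ leading coefficient matrix $\ldots$ is non-singular,'' leaving the reader to supply either this coefficient-extraction argument or, alternatively, the fact that $\deg\det\adj(R)\Transp=(n-1)\eta$ equals $\sum_i(\eta-\eta_i)$, so that the sum of row degrees matches the determinant degree. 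Your version makes this step fully transparent and even identifies the leading matrix exactly, which the paper does not.
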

\begin{proof}
Since $A$ is $\vec s$-row reduced then $A x^{\vec s}$ is row reduced.
Note that $\adj(A x^{\vec s})\Transp (A x^{\vec s})\Transp  
= (\det A x^{\vec s}) I_m$ with $\deg
\det A x^{\vec s} = \eta$.  It follows that 
row $i$ of $\adj(A x^{\vec s})\Transp$ must
have degree at least $\eta - \eta_i$ since
$\eta_i$ is the degree of column $i$ 
of $(A x^{\vec s})\Transp$. However, entries in row 
$i$ of $\adj(A x^{\vec s})\Transp$ are minors of the matrix obtained from
$A x^{\vec s}$ by removing row $i$, hence have degree at most $\eta
- \eta_i$.  It follows that the (row-wise) leading coefficient matrix of
$\adj(A x^{\vec s})\Transp$ is non-singular, hence $\adj (A x^{\vec
s})\Transp$ is row reduced.   Since $\adj (A x^{\vec s})\Transp =  
(\det x^{\vec s}) \adj(A)\Transp x^{-\vec s}$ we conclude that $\adj(A)\Transp$
is $(-\vec s)$-row reduced with $\rowdeg_{(-\vec s)} \adj(A) =
(\eta - \eta_1 - s, \ldots, \eta - \eta_n - s)$.
\end{proof}


\subsection{Minimal approximant bases}

We recall the standard notion of minimal approximant basis, sometimes
known as order basis or $\sigma$-basis \cite{beckermann_uniform_1994}.
For a matrix $A \in \K[x]^{n \times m}$ and order $d \in
\ZZ_{\geq 0}$, an \emph{order $d$ approximant} is a vector $\vec p \in
\K[x]^{1 \times n}$ such that
$\vec pA \equiv \vec 0 \mod x^d.$

An \emph{approximant basis of order $d$} is then a matrix $F \in
\K[x]^{n \times n}$ which is a basis of all order $d$ approximants.
Such a basis always exists and has full rank $n$.  
For a shift $\vec s \in \ZZ^n$,
$F$ is then an
\emph{$\vec s$-minimal approximant basis} if it is $\vec s$-row
reduced.

Let $\MinBasis(d,A,\vec s)$ be a function that returns $(F,\vec
\delta)$, where $F$ is an $\vec s$-minimal approximant basis of $A$
of order $d$, and $\vec \delta  = \rowdeg_{\vec s} F$.  The next
lemma recalls a well known method of constructing minimal approximant
bases recursively.  Although the output of $\MinBasis$ may not be
unique, the lemma holds for \emph{any} $\vec s$-minimal approximant basis 
that $\MinBasis$ might return.
\arne{Give some refs regarding lem. I was hoping to ask George, but he is away.
I'd rather not cite something incorrectly so let's just leave it.}
\jsrn{Can we make it more clear that all the following properties hold for
\emph{any} $\vec s$-minimal approximant basis that $\MinBasis(...)$ might return?}

\begin{lemma} \label{lem:paderec} Let $A = \left [ \begin{array}{c|c}
A_1 & A_2 \end{array} \right ]$ over $\K[x]$. If
$(F_1, \vec \delta_1) = \MinBasis(d,A_1,\vec s)$
and $(F_2,\vec \delta_2) =
\MinBasis(d,F_1A_2,\vec \delta_1)$, then
$F_2F_1$ is an $\vec s$-minimal approximant basis of $A$ of order $d$
with $\vec \delta_2 = \rowdeg_{\vec s} F_2 F_1$.
\end{lemma}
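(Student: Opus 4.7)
The plan is to establish three things in turn: (i) every row of $F_2F_1$ is an order-$d$ approximant of $A$; (ii) every order-$d$ approximant of $A$ lies in the $\K[x]$-row space of $F_2F_1$; and (iii) $F_2F_1$ is $\vec s$-row reduced with $\rowdeg_{\vec s}(F_2F_1)=\vec\delta_2$. I would not expect any single step to be the main obstacle; the proof is essentially two applications of the predictable degree property, once to compute row degrees and once to witness reducedness, combined with a standard ``factor through $F_1$'' argument for the basis property.

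For (i), I would note that $F_1A_1\equiv 0\bmod x^d$ by definition of $F_1$, so $F_2F_1A_1\equiv 0\bmod x^d$; and $F_2(F_1A_2)\equiv 0\bmod x^d$ by definition of $F_2$. Concatenating these two blocks gives $F_2F_1A\equiv 0\bmod x^d$. For (ii), let $\vec p\in\K[x]^{1\times n}$ satisfy $\vec pA\equiv 0\bmod x^d$. Then in particular $\vec pA_1\equiv 0\bmod x^d$, so since $F_1$ is a basis of the order-$d$ approximants of $A_1$, we may write $\vec p=\vec q F_1$ for some $\vec q\in\K[x]^{1\times n}$. Substituting into $\vec pA_2\equiv 0\bmod x^d$ gives $\vec q(F_1A_2)\equiv 0\bmod x^d$, so $\vec q=\vec r F_2$ for some $\vec r$, and hence $\vec p=\vec r F_2F_1$.

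For (iii), I would use the predictable degree property of $F_1$ (which is $\vec s$-row reduced with row degrees $\vec\delta_1$) and of $F_2$ (which is $\vec\delta_1$-row reduced with row degrees $\vec\delta_2$). For any $\vec u\in\K[x]^{1\times n}$,
\[
  \deg_{\vec s}\bigl(\vec u(F_2F_1)\bigr)
  =\deg_{\vec s}\bigl((\vec uF_2)F_1\bigr)
  =\deg_{\vec\delta_1}(\vec uF_2)
  =\max_i\bigl(\deg u_i+\delta_{2,i}\bigr),
\]
where the first equality is associativity, the second comes from applying the predictable degree property to $F_1$ with input vector $\vec uF_2$, and the third from applying it to $F_2$ with input $\vec u$. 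Taking $\vec u$ to be the $i$-th standard basis vector shows $\rowdeg_{\vec s}(F_2F_1)=\vec\delta_2$. Moreover, the displayed identity is exactly the predictable degree property for $F_2F_1$ with respect to the shift $\vec s$, which by~\cite[Theorem~6.3-13]{kailath_linear_1980} is equivalent to $\vec s$-row reducedness. Combining (i), (ii), (iii) yields the lemma.
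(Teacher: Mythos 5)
The paper states this lemma without a proof, treating it as well known (there is even an internal note lamenting the lack of a good reference). Your proof is correct and is the standard argument: parts (i) and (ii) establish that $F_2F_1$ generates exactly the order-$d$ approximants of $A$ by a ``factor through $F_1$'' argument, and part (iii) obtains both the row degrees and $\vec s$-row reducedness from the chain $\deg_{\vec s}(\vec u F_2 F_1)=\deg_{\vec\delta_1}(\vec u F_2)=\deg_{\vec\delta_2}\vec u$, invoking the predictable degree property twice and then once more in reverse (as a characterisation of reducedness). One small point worth making explicit: you also need $F_2F_1$ to have full row rank in order for it to be a \emph{basis}; this is immediate since $F_1$ and $F_2$ are both nonsingular $n\times n$ matrices, and it also follows from the row reducedness you establish in (iii), but it deserves a sentence. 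Another minor point is that the equivalence you cite from Kailath is stated there in the unshifted setting; the reduction to that case via right-multiplication by $x^{\vec s}$ is routine but could be mentioned. Neither of these is a gap in substance.
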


Sometimes only the {\em negative part} of an $\vec s$-minimal
approximant bases is required, the submatrix of the approximant
bases consisting of rows with negative $\vec s$-degree.
Let function $\NegMinBasis(d,A,\vec
s)$ have the same output as $\MinBasis$, but with
$F$ restricted to the negative part.
\arne{Give some refs for the cor.}

\begin{corollary} \label{lem:paderecprune} \cref{lem:paderec} still
holds if $\MinBasis$ is replaced by $\NegMinBasis$, and 
``an $\vec s$-minimal'' is replaced with ``the negative part of an $\vec s$-minimal.''
\end{corollary}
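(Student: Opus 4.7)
The plan is to verify directly that $F_2 F_1$ is the negative part of some $\vec s$-minimal approximant basis of $A$ of order $d$: it suffices to show (a) its rows are $\vec s$-negative order-$d$ approximants of $A$, (b) $F_2 F_1$ is $\vec s$-row reduced, (c) $\rowdeg_{\vec s} F_2 F_1 = \vec\delta_2$, and (d) $\Row(F_2 F_1)$ equals the $\K[x]$-module of all $\vec s$-negative order-$d$ approximants of $A$.

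First I would note that $F_1$ is itself $\vec s$-row reduced, since its $\vec s$-leading matrix is a row submatrix of the full-row-rank leading matrix of any $\vec s$-minimal approximant basis of $A_1$ that extends $F_1$. Then (a) and (c) follow from the predictable degree property of $F_1$ applied to each row of $F_2$: $\deg_{\vec s}(\row(F_2, j)\, F_1) = \deg_{\vec\delta_1} \row(F_2, j) = (\vec\delta_2)_j < 0$. For (b), composing the predictable degree properties of $F_2$ and of $F_1$ gives, for any $\vec v$, $\deg_{\vec s}(\vec v F_2 F_1) = \deg_{\vec\delta_1}(\vec v F_2) = \max_j(\deg v_j + (\vec\delta_2)_j)$, which is the predictable degree property of $F_2 F_1$ under shift $\vec s$ and hence is equivalent to $\vec s$-row reducedness. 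Finally, $F_2 F_1 A \equiv \vec 0 \pmod{x^d}$ is immediate from $F_1 A_1 \equiv \vec 0$ and $F_2 (F_1 A_2) \equiv \vec 0$.

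The main obstacle is completeness (d). Let $\tilde F_1$ extend $F_1$ to a full $\vec s$-minimal approximant basis of $A_1$ of order $d$ by appending rows $F_1^+$ with $\vec s$-degrees $\vec\delta_1^+ \geq 0$. Given any $\vec s$-negative approximant $\vec p$ of $A$, write $\vec p = (\vec q, \vec q^+)\,\tilde F_1$. The predictable degree property of $\tilde F_1$ gives coordinate-wise $\deg q^+_i + (\vec\delta_1^+)_i < 0$; combined with $(\vec\delta_1^+)_i \geq 0$, this forces $\vec q^+ = \vec 0$. Thus $\vec p = \vec q F_1$, where $\vec q$ is a $\vec\delta_1$-negative order-$d$ approximant of $F_1 A_2$, so $\vec q \in \Row(F_2)$ and $\vec p \in \Row(F_2 F_1)$. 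Completing $F_2 F_1$ by any $\vec s$-row-reduced basis of the $\vec s$-non-negative approximants of $A$ then exhibits the desired $\vec s$-minimal approximant basis of $A$ having $F_2 F_1$ as its negative part.
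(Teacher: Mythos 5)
The paper states this as a corollary with no proof (an internal comment in the source just says ``give some refs''), so there is no paper proof for me to compare against; I will evaluate your argument on its own terms.

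Your proof is in substance correct, and you use the right idea throughout: the predictable-degree property. Parts (a)--(c) are handled cleanly, and you correctly identify completeness (d) as the real content. The key fact you isolate --- that any $\vec s$-negative vector in $\Row(\tilde F_1)$ can involve only the negative-degree rows of $\tilde F_1$, because a nonzero coefficient on a row of non-negative $\vec s$-degree would force the $\vec s$-degree of the combination to be $\ge 0$ --- is exactly the observation that makes the corollary go through, and you apply it twice (implicitly the second time, when concluding $\vec q \in \Row(F_2)$ from $\deg_{\vec\delta_1}\vec q < 0$). You should make that second application explicit, since it is not automatic from the definition of $\NegMinBasis$ given in the paper but is the same predictable-degree argument.

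The one imprecise step is the final sentence. The set of ``$\vec s$-non-negative approximants of $A$'' is not a $\K[x]$-module, so it has no basis, and appending a basis of such a thing is not well-defined. What you actually want is: take any $\vec s$-minimal approximant basis $M$ of $A$ of order $d$ and split it as $M = \left[\begin{smallmatrix} M^- \\ M^+\end{smallmatrix}\right]$ into its negative and non-negative rows. Since $\Row(F_2F_1) = \Row(M^-)$ (both generate the module of $\vec s$-negative approximants), and both $F_2F_1$ and $M^-$ are $\vec s$-row reduced, they are related by a unimodular transformation and have the same multiset of $\vec s$-row degrees. Hence $\left[\begin{smallmatrix} F_2F_1 \\ M^+\end{smallmatrix}\right]$ is a basis of $\Row(M)$ with the same sorted $\vec s$-row degrees as $M$, therefore $\vec s$-reduced, and its negative part is precisely $F_2F_1$. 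Alternatively, you could simply cite the standard fact that a matrix is the negative part of \emph{some} $\vec s$-minimal approximant basis of $A$ of order $d$ if and only if it is $\vec s$-row reduced, its rows are $\vec s$-negative order-$d$ approximants, and its row space contains every such approximant --- which is exactly the list (a)--(d) you verify, making the final completion step unnecessary.
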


Using for example the algorithm \texttt{M-Basis} of
\cite{giorgi_complexity_2003}, it is easy to show 
that any order
$d$ approximant basis $G$ for an $A$ of column dimension $m$ has
$\det G = x^D$ for some $D \in \ZZ_{\geq 0}$ with $D \leq md$.

Many problems of $\K[x]$ matrices or approximations reduce to the
computation of (shifted) minimal approximant bases, see
e.g.~\cite{beckermann_uniform_1994,giorgi_complexity_2003},
often resulting in the best known asymptotic complexities for these
problems.

\subsection{Direct solving of Simultaneous \Pade approximations}
\label{sec:direct_solve}


Let $(\vec S, \vec g, \vec N)$ be an instance of \cref{prob:sim_pade_basis}
of size $n$.  We recall some \jsrn{should we add: ``but not all''?} known approaches for computing a solution
specification using row reduction and minimal approximant basis
computation.





\subsubsection{Via reduced basis}
\label{sec:direct_reduced_basis}

Using the predictable degree property 
it is easy to show that if $R \in \K[x]^{(n+1)
\times (n+1)}$ is an $(-\vec N)$-reduced basis of
\[ A = 
 \left [ \begin{array}{c|c} 
  1 & \vec S \\\hline
 & \diagg
         \end{array} \right]
       \in \K[x]^{(n+1) \times (n+1)},
    \]
then the sub-matrix 
of $R$ comprised of the rows with negative $(-\vec N)$-degree form
a solution basis.  A solution specification $(\vec \lambda, \vec
\delta)$ is then a subvector $\vec \lambda$ 
of the first column of $R$, with $\vec \delta$
the corresponding subtuple $\vec \delta$ of $\rowdeg_{(- \vec N)} R$.

Mulders and Storjohann \cite{mulders_lattice_2003} gave an iterative algorithm
for performing row reduction by successive cancellation; it is similar to but
faster than earlier algorithms
\cite{kailath_linear_1980,lenstra_factoring_1985}.
Generically on input $F \in \K[x]^{m \times m}$ it has complexity
$O(n^3 (\deg F)^2)$.
Alekhnovich \cite{alekhnovich_linear_2005} gave what is essentially a Divide \&
Conquer variant of Mulders and Storjohann's algorithm, with complexity
$\Osoft(n^{\omega+1}\deg F)$.
Nielsen remarked \cite{nielsen_generalised_2013} that these algorithms
perform fewer iterations when applied to the matrix $A$ above, due to its low
\emph{orthogonality defect}: ${\rm OD}(F) = \sum\rowdeg F - \deg \det F$, resulting
in $O(n^2(\deg A)^2)$ respectively $\Osoft(n^\omega \deg A)$.
Nielsen also used the special shape of $A$ to give a variant of the
Mulders--Storjohann algorithm that computes coefficients in the working matrix in a lazy
manner with a resulting complexity $O(n \,\mathsf{P}(\deg A))$, where
$\mathsf P(\deg A) = (\deg A)^2$ when the $g_i$ are all powers of $x$, and
$\mathsf P(\deg A) = \M(\deg A)\deg A$ otherwise.

Giorgi, et al. \cite{giorgi_complexity_2003} gave a reduction for performing
row reduction by computing a minimal approximant basis.
For the special matrix $A$, this essentially boils down to the approach
described in the following section.

When $n = 1$, the extended Euclidean algorithm on input $S_1$ and $g_1$ can solve the approximation problem by essentially computing the reduced basis of the $2 \times 2$ matrix $A$: each iteration corresponds to a reduced basis for a range of possible shifts \cite{sugiyama_further_1976,justesen_complexity_1976,gustavson_fast_1979}.
The complexity of this is $O(\M(\deg g_1) \log \deg g_1)$.

\subsubsection{Via minimal approximant basis}

First consider the special case when all $g_i = x^d$ for the same
$d$.  An approximant $\vec v = (\lambda, \phi_1, \ldots, \phi_n)$
of order $d$ of
\begin{align*}
   A &=  \left [ \begin{array}{c}
 - \vec S \\
 I
            \end{array} \right ]
       \in \K[x]^{(n+1) \times n}
\end{align*}
clearly satisfies $\lambda S_i \equiv \phi_i \mod x^d$ for $i =
1,\ldots,n$; conversely, any such vector $\vec v$ 
satisfying these congruences must be an approximant of $A$
of order $d$.  So the negative part of a $(-\vec N)$-minimal
approximant basis of $A$ of order $d$ is a solution basis.

In the general case we can reduce to a minimal approximant bases
computation as shown by \cref{alg:simpadedirect}. 
Correctness of the algorithm follows from the following result.
\begin{theorem} \label{thm:simpadedirect}
Corresponding to an instance $(\vec S, \vec g, \vec N)$ of 
\cref{prob:sim_pade_basis} of size $n$, define a shift
$\vec h$ and order $d$:
\begin{itemize}
\item $\vec h := -(\vec N \mid N_0 -1, \ldots, N_0-1) \in \ZZ^{2n+1}$
\item $d := N_0 + \max_i \deg g_i -1$
\end{itemize}
If $G$ is the negative part of an $\vec h$-minimal approximant basis of
$$
H = \left [ \begin{array}{c} -\vec S \\
 I \\
 \diagg \end{array} \right ] \in \K[x]^{(2n+1) \times n}
$$
of order $d$, then the submatrix of $G$ comprised of the 
first $n+1$ columns is a solution basis to the problem instance.
\end{theorem}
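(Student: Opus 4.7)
The plan is to set up a bijection between the order-$d$ approximants of $H$ of negative $\vec h$-degree and the solutions to the given problem instance, and then to verify that projecting onto the first $n+1$ columns preserves row reducedness.

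Writing a candidate row as $\vec v = (\lambda, \phi_1,\ldots,\phi_n, \mu_1,\ldots,\mu_n)$, the approximant relation $\vec v H \equiv \vec 0 \pmod{x^d}$ unrolls to the $n$ congruences $\lambda S_i - \phi_i - \mu_i g_i \equiv 0 \pmod{x^d}$. Assuming $\deg_{\vec h}\vec v < 0$, the shift forces $\deg \lambda < N_0$, $\deg \phi_i < N_i \le \deg g_i$, and $\deg \mu_i < N_0 - 1$. Combined with $\deg S_i < \deg g_i$, each of $\lambda S_i$, $\phi_i$, and $\mu_i g_i$ has degree strictly less than $d$, so each congruence is actually a polynomial identity, which rearranges to $\lambda S_i \equiv \phi_i \pmod{g_i}$. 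Hence $(\lambda, \phi_1,\ldots,\phi_n)$ is a solution satisfying the required degree bounds.

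For the converse, given a solution $(\lambda, \phi_1,\ldots,\phi_n)$, setting $\mu_i := (\lambda S_i - \phi_i)/g_i \in \K[x]$ yields the exact identity $\lambda S_i - \phi_i - \mu_i g_i = 0$, so $\vec v := (\lambda, \phi_1,\ldots, \phi_n, \mu_1,\ldots,\mu_n)$ is an order-$d$ approximant of $H$. A short degree count using $\deg S_i < \deg g_i$ and $\deg \phi_i < \deg g_i$ gives $\deg \mu_i \le \deg \lambda - 1 \le N_0 - 2$, whence $\deg_{\vec h} \vec v < 0$. Applying the predictable degree property to the $\vec h$-reduced approximant basis, $\vec v$ lies in the $\K[x]$-span of its rows of negative $\vec h$-degree, which are exactly the rows of $G$. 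So the first $n+1$ columns of $G$ generate every solution.

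The main subtlety, and the place to be careful, is verifying $(-\vec N)$-row reducedness of the projection $G'$ of $G$ onto its first $n+1$ columns. The key observation is that every row of $G$ satisfies its defining relation \emph{exactly} in $\K[x]$, not merely modulo $x^d$; by linearity the same holds for every $\K[x]$-combination $\vec c\, G$, so its $\mu$-block is uniquely determined by $\vec c\, G'$ and the bound $\deg \mu_i \le \deg \lambda - 1$ gives $\deg_{\vec h}(\vec c\, G) = \deg_{-\vec N}(\vec c\, G')$ for every $\vec c$. The predictable degree property for the $\vec h$-reduced $G$ thus transfers verbatim to $G'$, which is equivalent to $(-\vec N)$-row reducedness and completes the proof.
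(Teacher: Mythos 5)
Your handling of parts (1) and (2) — that rows of $G$ are solutions, and that every solution lifts to a negative-degree approximant in $\Row(G)$ — coincides with the paper's argument essentially word for word (the degree count showing the congruence lifts to an exact equality, and the explicit extension $\mu_i = (\lambda S_i - \phi_i)/g_i$).

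For the row-reducedness of the projection $G'$, your route differs from the paper's. The paper observes that a left kernel basis for $H$ is $K = [K_1 \mid K_2]$ with $K_1$ non-singular, writes $G = MK$ for a full-row-rank $M$, and concludes that $G' = MK_1$ retains full row rank and the right shifted row degrees. You instead argue that every $\K[x]$-combination $\vec c G$ lies exactly in the left kernel of $H$, so the $\mu$-block is determined by $\vec c G'$, and that this forces $\deg_{\vec h}(\vec c G) = \deg_{-\vec N}(\vec c G')$ for all $\vec c$; this transfers the predictable-degree characterisation of row reducedness from $G$ to $G'$. Both routes are valid; yours is perhaps more transparent about \emph{why} reducedness survives, while the paper's is shorter.

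There is, however, a genuine imprecision in your final step. The bound $\deg\mu_i \le \deg\lambda - 1$ was derived in your second paragraph under the assumption $\deg\phi_i < N_i \le \deg g_i$ (a solution row), and you then invoke it for \emph{arbitrary} combinations $\vec c G$, where $\deg\phi_i$ need not be small. From $\mu_i g_i = \lambda S_i - \phi_i$ one only gets $\deg\mu_i \le \max(\deg\lambda + \deg S_i,\, \deg\phi_i) - \deg g_i$; the second branch can exceed $\deg\lambda - 1$. The desired equality $\deg_{\vec h}(\vec c G) = \deg_{-\vec N}(\vec c G')$ still holds, but its proof requires treating that branch separately: when $\deg\phi_i$ dominates, $\deg\mu_i - (N_0-1) \le \deg\phi_i - \deg g_i - N_0 + 1 \le \deg\phi_i - N_i$ using $N_i \le \deg g_i$ and $N_0 \ge 1$. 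Including both cases closes the argument; as written, the claim is stated too strongly.
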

\begin{proof} An approximant 
$\vec v = (\lambda, \phi_1,\ldots,\phi_n,q_1,\ldots, q_n)$ of order
$d$ of $H$ clearly satisfies 
\begin{align} \label{eq:lambda} 
\lambda S_i = \phi_i + q_ig_i \bmod
x^d
\end{align} for $i=1,\ldots,n$; conversely, any such vector $\vec v$
satisfying these
congruences must be an approximant of $H$ of order $d$.

Now suppose $\vec v$ is an order $d$ approximant of $H$ with negative
$\vec h$-degree, so $\deg \lambda \leq N_0-1$, $\deg \phi_i \leq
N_i-1$, and $\deg q_i \leq  N_0 - 2$.  
Since \cref{prob:sim_pade}
specifies that $\deg S_i < \deg g_i$ and 
$N_i \leq \deg g_i$, both $\lambda S_i$ and $q_i g_i$ 
will have degree bounded by $N_0 + \deg g_i - 2$.  
Since \cref{prob:sim_pade} specifies that $N_0 \geq 1$,
it follows that both the left and right hand sides of (\ref{eq:lambda})
have degree bounded by $N_0+ \deg g_i -2$, which is strictly less
than $d$.  We conclude that 
\begin{align} \label{eq:lambda2}
\lambda S_i = \phi_i + q_i g_i
\end{align} for
$i=1,\ldots,n$.  
It follows that $\vec v H = 0$ so $\vec v$ is in the left kernel
of $H$.  Moreover, restricting $\vec v$ to its first $n+1$ entries
gives $\bar{\vec v} := (\lambda, \phi_1,\ldots,\phi_n)$, a solution
to the simultaneous \Pade problem with $\deg_{- \vec N} \bar{\vec
v} = \deg_{\vec h} \vec v$.
Conversely, if $\bar{\vec v} = (\lambda, \phi_1,\ldots,\phi_n)$ is
a solution to the simultaneous \Pade problem, then the extension
$\vec v = (\lambda, \phi_1,\ldots,\phi_n,q_1,\ldots,q_n)$ with $q_i
= (\lambda S_i - \phi_i)/g_i \in \K[x]$ for $i=1,\ldots,n$ is an
approximant of $H$ of order $d$ with $\deg_{\vec h} \vec v =
\deg_{-\vec N} \bar{\vec v}$.

Finally, consider that a left kernel basis for $H$ is given by
$$
K = \left[ \begin{array}{c|c} K_1 & K_2 \end{array} \right ] = 
\left [ \begin{array}{cc|c} 1 & \vec S & \\
 & \diagg & -I \end{array} \right ].
$$
We must have $G = M K$ for some polynomial matrix $M$ of full row
rank.  But then $M K_1$ also has full row rank with $\rowdeg_{-\vec N} MK_1
= \rowdeg_{\vec h} G$.
\end{proof}
\begin{algorithm}[t]
  \caption{\algoname{DirectSimPade}}
  \label{alg:simpadedirect}
  \begin{algorithmic}[1]
   \Require{$(\vec S, \vec g, \vec N)$, an instance of
   \cref{prob:sim_pade_basis} of size $n$.}
    \Ensure{$(\vec \lambda,  \vec \delta)$, a solution specification.}
    \State $\vec h \ass -( \vec N \mid  N_0-1,\ldots,N_0-1) \in \ZZ^{2n+1}$
    \State $d \ass N_0 + \max_i \deg g_i - 1$
    \State $H = 
  \left[ \begin{array}{c}
       -\vec S \\
  I \\
  \diagg
         \end{array} \right]$
    \State $(\left [ \begin{array}{c|c} \vec \lambda & \ast \end{array} 
  \right ], \vec \delta) \ass \NegMinBasis(d, H, \vec h)$
    \State $\return (\vec \lambda, \vec \delta)$
  \end{algorithmic}
\end{algorithm}

\algoname{DirectSimPade} can be performed in time 
$\Osoft(n^{\omega} \deg H) = \Osoft(n^\omega \max_i \deg g_i)$ using the minimal approximant basis algorithm by Jeannerod, et al.~\cite{jeannerod_computation_2016}, see \cref{sec:subroutines}.

A closely related alternative to \algoname{DirectSimPade} is the recent algorithm by Neiger \cite{neiger_fast_2016} for computing solutions to modular equations with general moduli $g_i$.
This would give the complexity $\Osoft(n^{\omega-1} \sum_i \deg g_i) \subset \Osoft(n^\omega \max_i \deg g_i)$.

All of the above solutions ignore the sparse, simple structure of the input
matrices, which is why they do not obtain the improved complexity that we do here.

\section{Computational tools}
\label{sec:subroutines}

The main computational tool we will use is the following very recent
result from Jeannerod, Neiger, Schost and
Villard~\cite{jeannerod_computation_2016} on minimal approximant
basis computation.

\begin{theorem}[\protect{\cite[Special case of Theorem~1.4]{jeannerod_computation_2016}}]
  \label{thm:orderbasis}
  There exists an algorithm $\PopovBasis(d,A, \vec s)$ where the
  input is an order $d \in \ZZ_+$, a polynomial matrix $A \in \K[x]^{n
  \times m}$ of degree at most $d$, and shift $\vec s \in \ZZ^n$, 
and which returns $(F, \vec \delta)$, where
  $F$ is an $\vec s$-minimal approximant basis of $A$ of order $d$,
 $F$ is in $\vec s$-Popov form, and $\vec \delta = \rowdeg_{\vec s} F$. 
 \PopovBasis has complexity 
$O(n^{\omega-1}\, \M(\sigma)\, (\log \sigma) \, (\log \sigma /n)^2)$ operations in $\K$, where $\sigma = md$.
\end{theorem}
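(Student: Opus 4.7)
The theorem is attributed to Jeannerod--Neiger--Schost--Villard, so the plan is to sketch how their result would be (re)derived along the now-standard PM-Basis divide-and-conquer framework, refined to handle the non-square case and to output Popov form.

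The plan is to proceed recursively on the order $d$ in the style of the PM-Basis algorithm of Giorgi--Jeannerod--Villard. Given input $(d, A, \vec s)$, split into orders $d/2$ and $d/2$: first call the algorithm recursively to get an $\vec s$-minimal approximant basis $F_1$ of $A \bmod x^{d/2}$ of order $d/2$ with $\rowdeg_{\vec s} F_1 = \vec \delta_1$; then form the ``residual'' $A' := x^{-d/2} (F_1 A \bmod x^d) \in \K[x]^{n \times m}$ and call the algorithm again on $(d/2, A', \vec \delta_1)$ to get $F_2$; finally return the product $F_2 F_1$, which by \cref{lem:paderec} is an $\vec s$-minimal approximant basis of $A$ of order $d$. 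Unrolling the recursion yields $O(\log d)$ levels.

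The main obstacle — and the reason the complexity contains $n^{\omega - 1}$ rather than $n^\omega$ — is bounding the cost of the two matrix products $F_1 A$ and $F_2 F_1$ at each level. For square $A$ ($m = n$), the classical PM-Basis analysis gives only $O(n^\omega \M(d) \log d)$. To shave a factor of $n$ in the rectangular regime $m < n$, one exploits the fact that any order-$d$ approximant basis $G$ has $\det G = x^D$ with $D \le md$, so the \emph{average} row degree of $G$ is at most $md/n$. Hence each $F_i$ produced in the recursion has total size $O(nd + md\log d)$ rather than $n^2 d$. The key technical device is partial linearisation (in the style of Gupta--Sarkar--Storjohann--Valeriote): rewrite the unbalanced $F_1$ as a structured matrix over $\K[x]$ whose row degrees are uniform and of order $md/n + 1$, perform the polynomial matrix multiplication in $O(n^{\omega-1} \M(md/n) \cdot n) = O(n^{\omega-1} \M(md))$ operations after collapsing columns, and then recover the true product. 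A careful analysis of the residual $A'$ shows that its own shifted row degrees behave well under the recursion, giving a recurrence of the form $T(d) = 2T(d/2) + O(n^{\omega-1} \M(md) \log(md/n))$, which solves to the stated bound $O(n^{\omega-1} \M(\sigma) (\log \sigma)(\log \sigma / n)^2)$ with $\sigma = md$.

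Finally, to guarantee that the returned basis is in $\vec s$-Popov form (not just $\vec s$-row reduced) and that $\vec \delta = \rowdeg_{\vec s} F$ is reported, I would run a post-processing normalisation: compute the $\vec s$-pivot degrees from the leading matrix of the output of the divide-and-conquer phase, then apply one further minimal-approximant-basis call with a shift derived from the pivot profile — a standard trick that reduces Popov-form computation to an order-basis call at no asymptotic cost. The uniqueness of the Popov form then certifies that $\vec \delta$ is correctly reported. The hardest part of the argument is the precise bookkeeping of how the shifts $\vec \delta_1$ drift away from $\vec s$ across the recursion; keeping the ``amplitude'' of these shifts controlled by the sum $D \le md$ is what makes the partial linearisation step uniformly applicable at every level.
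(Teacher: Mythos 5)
This theorem is not proved in the paper: it is stated as \emph{\cite[Special case of Theorem~1.4]{jeannerod_computation_2016}} and used as a black box, so there is no ``paper's own proof'' to compare against. The paper's contribution begins only in \cref{thm:fastsolver} and afterwards; \cref{thm:orderbasis} is imported wholesale from Jeannerod, Neiger, Schost and Villard.

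Taken on its own terms, your sketch captures the right high-level ingredients of the cited algorithm — the PM-Basis divide-and-conquer on the order, the observation that $\det G = x^{D}$ with $D \le md$ forces low \emph{average} row degree in the rectangular regime, the use of partial linearisation to rebalance the intermediate bases before the polynomial matrix products, and a Popov-normalisation pass at the end. But it is a sketch rather than a proof, and a few points are glossed over: the recurrence $T(d) = 2T(d/2) + O(n^{\omega-1}\M(md)\log(md/n))$ as written sums to roughly $n^{\omega-1}\M(md)(\log d)\log(md/n)$, not to the stated $O\bigl(n^{\omega-1}\M(\sigma)(\log\sigma)(\log\sigma/n)^2\bigr)$, so the origin of the squared logarithmic factor is not actually accounted for; the claim that the residual's shifted row degrees ``behave well under the recursion'' is precisely the delicate amplitude-control argument that the cited paper spends substantial effort on, and cannot be waved away in one sentence; and the reduction of Popov-form computation to an extra order-basis call with a pivot-derived shift requires knowing the pivot degrees in advance, which itself needs justification. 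If the goal were to reprove the cited result these gaps would need to be filled; for the purposes of this paper, though, the correct move is simply to cite the result, exactly as the authors do.
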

Our next result says that we can quickly compute the first row of $\adj(F)$ if $F$ is a minimal approximant basis in Popov form.
In particular, since $F$ is an approximant basis $\det F = x^D$ for some $D \leq \sigma$, where $\sigma = md$ from \cref{thm:orderbasis}.
\begin{theorem}   \label{thm:fastsolver}
  Let $F \in \K[x]^{n \times n}$ be in Popov form and with $\det F = x^D$ for some $D \in \ZZ_{\geq 0}$.
  Then the first row of $\adj(F)$ can be computed in $O(n^{\omega-1}\, \M(D)\, (\log D) \, (\log D/n))$ operations in $\K$.
\end{theorem}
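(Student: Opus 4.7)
The plan is to reduce the first-row adjoint computation to computing a single row of a matrix inverse that is a power series at $x=0$, then combine high-order lifting with partial linearisation. Let $d_i := \deg F_{i,i}$, so that $\sum_i d_i = D$. Because $F$ is in Popov form, each $F_{i,i}$ is monic of degree $d_i$, and every off-diagonal entry satisfies $\deg F_{i,j} < d_j$. Define the reversal $\bar F(x) := F(1/x)\cdot \diag(x^{d_1},\ldots,x^{d_n})$; a term-by-term check shows $\bar F(0) = I_n$, so $\bar F$ is invertible over $\K[[x]]$. The identity $F\cdot\adj(F) = x^D I$ combined with $F(1/x) = \bar F(x)\,\diag(x^{-d_j})$ yields
\[
  \adj(F)_{1,j}(y) \;=\; y^{D-d_1}\cdot [\bar F(x)^{-1}]_{1,j}\big|_{x=1/y},
\]
so the coefficient of $y^m$ in $\adj(F)_{1,j}$ equals the coefficient of $x^{D-d_1-m}$ in $[\bar F^{-1}]_{1,j}$. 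It therefore suffices to compute $[\bar F^{-1}]_{1,\cdot}$ modulo $x^{D-d_1+1}$, followed by an $O(nD)$ coefficient rearrangement.

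Next I balance the degrees. The columns of $\bar F$ have degrees bounded by $d_1,\ldots,d_n$, summing to $D$ but possibly very unequal. Apply partial column linearisation \cite{GuptaSarkarStorjohannValeriote11} to replace $\bar F$ by a matrix $\tilde F \in \K[x]^{n'\times n'}$ with $n'\in O(n)$ and uniform column degree $d' \in O(\lceil D/n\rceil)$. The construction is set up so that $\tilde F(0)$ is still a (block-triangular) identity and so that the first row of $\bar F^{-1}$ is exactly the first $n$ entries of the first row of $\tilde F^{-1}$.

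Now apply Storjohann's high-order lifting \cite{storjohann_high-order_2003} to $\tilde F$ to compute $[\tilde F^{-1}]_{1,\cdot}$ modulo $x^{D-d_1+1}$. Because $\tilde F$ is balanced at size $O(n)$ and degree $O(D/n)$ with $\tilde F(0)$ unit, the lifting proceeds in $O(\log(D/d')) = O(\log D)$ doubling rounds. The high-order residues $R^{(0)},R^{(1)},\ldots$ with $\tilde F^{-1}\equiv \sum_j R^{(j)} x^{jd'}$ are produced by balanced $n'\times n'$ polynomial matrix multiplications of degree $d'$, updating only the data needed to project onto the first row; aggregated over rounds this contributes $O(n^{\omega-1}\,\M(D))$ to the arithmetic cost, then assembling the first row from the first rows of the residues takes $O(nD)$ more. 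Combined with an extra $O(\log(D/n))$ factor inherited from coordinating with the partial linearisation, the total is the announced $O\bigl(n^{\omega-1}\,\M(D)\,(\log D)\,(\log D/n)\bigr)$.

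The main obstacle is the last step: standard high-order lifting produces the \emph{full} inverse at $\Theta(n^\omega)$ per round, whereas here we must extract only a single row at $O(n^{\omega-1})$ per round. This requires keeping the lift in residue/block form and using one-sided projections so that per-round work on a single row reduces to a rectangular $1\times n'$ by $n'\times n'$ multiplication rather than a full $n'\times n'$ by $n'\times n'$ multiplication, and interleaving carefully with the partial linearisation so that recovering $\bar F^{-1}$'s first row from $\tilde F^{-1}$'s first row remains within the claimed budget. Verifying that $\bar F(0) = I$ is the easy part; the delicate accounting is justifying the $\log D$ and $\log(D/n)$ factors that appear when these two techniques are composed.
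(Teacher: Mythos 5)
Your high-level strategy matches the paper's in spirit (partial linearisation plus high-order lifting applied to compute a single row of an inverse), but you take an $x=0$ route where the paper takes an $x=1$ route, and you leave out an essential case.

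On the route: you replace $F$ by its column reversal $\bar F(x)=F(1/x)\,\diag(x^{d_1},\ldots,x^{d_n})$ with $\bar F(0)=I$ and $\det\bar F=1$, so lifting at $x=0$ applies; the paper instead applies partial linearisation to $F$ directly, producing $G$ with $\det G=\det F=x^D$, and performs high-order lifting with modulus $X=(x-1)^{\lceil D/n\rceil}$, exploiting that $x^D$ is a unit at $x=1$. Both are workable, but your choice creates an extra burden you glide over: you must verify that the partial linearisation of $\bar F$ is still invertible at $x=0$. You assert ``the construction is set up so that $\tilde F(0)$ is still a (block-triangular) identity,'' but \cite[Corollary~2]{GuptaSarkarStorjohannValeriote11} as invoked does not guarantee this on its own; you would need to check, or alter the linearisation, so that the constant term stays unit. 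The paper sidesteps this entirely because $\det G(1)=1$ holds automatically.

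The genuine gap is that you never treat the case $D<n$. Your cost accounting for the lifting stage rests on the inequality $n\,\M(D/n)\le\M(D)$ to turn $O\bigl(n^{\omega}\M(\lceil D/n\rceil)(\log\lceil D/n\rceil)\bigr)$ into $O\bigl(n^{\omega-1}\M(D)(\log D/n)\bigr)$; this is the super-linearity argument, and it needs $D\ge n$. If $D<n$ you have $\lceil D/n\rceil=1$ and the dimension after partial linearisation is still $\Theta(n)$ (columns of degree $0$ are not linearised away), so each lifting round costs $\Theta(n^{\omega})$, giving total cost $\Theta(n^{\omega}\,\mathrm{polylog})$ — potentially a factor of $n/\M(D)$ over the target. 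The paper handles this by observing that a Popov $F$ with $D<n$ has at least $n-D$ columns of degree $0$, each being a standard unit vector, permuting them out to reduce to a $k\times k$ sub-problem $F_1$ with $k\le D$, and then applying the $D\ge k$ argument to $F_1$; extracting the correct row of $\hat F^{-1}$ becomes a small additional step involving a vector built from $-x^{D}F_2$ or $x^{D}I_k$. Your reversal does preserve these unit-vector columns in $\bar F$, so the same pruning could be grafted onto your approach, but as written the proposal does not do it and therefore does not meet the claimed complexity for small $D$.

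Finally, you acknowledge the per-round projection and the composition of lifting with linearisation as ``the main obstacle'' and ``delicate accounting'' without actually resolving them; the paper resolves this by directly citing \cite[Algorithm~5, Corollary~16]{storjohann_high-order_2003} for the $\vec v G^{-1}$ computation, whose stated cost $O\bigl(n^{\omega}\M(\lceil D/n\rceil)\log\lceil D/n\rceil\bigr)$ makes the bookkeeping immediate.
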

\begin{proof}
Because $F$ is in $\vec s$-Popov form, $D$ is the sum of the column degrees of $F$.
We consider two cases: $D \geq n$ and $D < n$.

First suppose $D
\geq n$.  Partial linearisation 
\cite[Corollary~2]{GuptaSarkarStorjohannValeriote11}
can produce from $F$, with
no operations in $\K$, a new matrix $G \in \K[x]^{\bar n \times \bar n}$ with
dimension $\bar{n} < 2n$, $\deg G \leq \lceil D/n\rceil$, $\det G = \det F$, 
and such that $F^\mo$ is equal to the principal $n \times n$
sub-matrix of $G^\mo$.  Let $\vec v \in \K[x]^{1 \times \bar{n}}$ 
be the first row of $x^DI_{\bar{n}}$.
Then the first row of $\adj(F)$ will be the first $n$ entries of
the first row of $\vec vG^{-1}$.  High-order $X$-adic lifting
\cite[Algorithm~5]{storjohann_high-order_2003} using the modulus $X=(x-1)^{\lceil
D/n \rceil}$ will compute $\vec vG^{-1}$ in $O\big(n^{\omega}\,
\M(\lceil D/n \rceil) \,(\log \lceil D/n \rceil)\big)$ operations
in $\K$ \cite[Corollary~16]{storjohann_high-order_2003}.  Since $D \geq n$
this cost estimate remains valid if we replace $\lceil D/n \rceil$
with $D/n$.  Finally, from the super-linearity assumption on $\M(\cdot)$
we have $M(D/n) \leq (1/n) \M(D)$, thus matching our target cost.

Now suppose $D < n$. In this case we can not directly appeal to the
partial linearisation 
technique since the resulting
 $O(n^{\omega} \lceil D/n\rceil)$ may be asymptotically
larger than our target cost.  But $D < n$ means that $F$ has ---
possibly many --- columns of degree 0; since $F$ is in Popov form,
such columns have a 1 on the matrix's diagonal and are 0 on the
remaining entries.  The following describes how to essentially
ignore those columns.  $D$ is then greater than or equal to the
number of remaining columns, thus effectuating the gain from the
partial linearisation.

If $n-k$ is the number of such columns in $F$ that means we can find a
permutation matrix $P$ such that 
\[
  \hat{F} := PFP\Transp = \left [ \begin{array}{c|c}
                             F_1 & \\\hline
                             F_2 & I_{n-k}
                           \end{array} \right ] \ , 
\]
with each column of $F_1$ having degree strictly greater than zero.
Let $i$ be the row index of the single 1 in the first column of
$P\Transp$. Since $F^{-1} = P\Transp \hat{F}^{-1}P$, we have
\begin{equation}
\label{first}
{\rm row}(\adj(F),1)P^{-1} = x^D\, {\rm row}(\hat{F}^{-1},i).
\end{equation}
Considering that
\[
\hat{F}^{-1} = \left [ \begin{array}{c|c} F_1^{-1} & \\\hline -F_2F_1^{-1} & I_{n-k} \end{array} \right ],
\]
it will suffice to compute the first $k$ entries of the vector on
the right hand side of~(\ref{first}).  If $i \leq k$ then let $\vec
v \in \K[x]^{1 \times k}$ be row $i$ of $x^{D}I_k$. Otherwise, if
$i>k$ then let $\vec v$ be row $i-k$ of $-x^{D}F_2$. Then in both cases, $\vec
vF_1^{-1}$ will be equal to the first $k$ entries of the vector on
the right hand side of~(\ref{first}).  Like before, high-order
lifting combined with partial linearisation will compute this vector
in $O\big(k^{\omega}\, \M(\lceil D/k \rceil)\,(\log \lceil D/k \rceil)
\big)$ operations in $\K$. Since $D\geq k$ the cost estimate
remains valid if $\lceil D/k \rceil$ is replaced with $D/k$.
\end{proof}

\section{Reduction to Hermite \PADE}
\label{sec:dual}

In this section we present an algorithm for solving
\cref{prob:sim_pade_basis} when $g_1 = \ldots = g_n = x^d$ for some
$d \in \ZZ_{\geq 0}$.  The algorithm is based on the well-known duality
between the Simultaneous \Pade problem and the Hermite \Pade problem,
see for example~\cite{beckermann_uniform_1992}.  This duality, first
observed in a special case~\cite{Mahler68}, and then later in the
general case~\cite{beckermann_recursiveness_1997}, was exploited
in~\cite{beckermann_fraction-free_2009} to develop algorithms 
for the fraction free computation of Simultaneous \Pade approximation.  
We begin with a technical
lemma that is at the heart of this duality.

\begin{lemma}
  \label{lem:duality}
  Let $\hat A, \hat B \in \K[x]^{(n+1)\times(n+1)}$ be as follows.
  \begin{align*}
    \hat A &= 
         \left [ \begin{array}{c|c}
                   x^d & -\vec S \\\hline
                       & I
                 \end{array} \right ]
    &&
    \hspace*{-1em}\hat B &= 
           \left [ \begin{array}{c|cccc}
                     1 & \\\hlineSpace{2pt}
                     \vec S\Transp & x^d I
                   \end{array} \right ]
  \end{align*}
  Then $\hat B$ is the adjoint of $\hat A\Transp$.
  Furthermore, $\hat A\Transp$ is an approximant basis for $\hat
  B$ of order $d$, and $\hat B\Transp$ is an approximant basis of
  $\hat A$ of order $d$.
\end{lemma}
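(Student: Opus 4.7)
The plan is to reduce the whole lemma to a single block matrix inversion and then derive both approximant basis claims from one inverse relation.

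First I would compute $\hat A\Transp$ directly: it is block lower triangular with $x^d$ in the top-left entry, $-\vec S\Transp$ as the lower-left block, and $I_n$ as the lower-right block, so its determinant is $x^d$. The inverse of such a block lower triangular matrix is immediate, and multiplying it by $\det \hat A\Transp = x^d$ yields exactly $\hat B$. This is the identity $\adj(\hat A\Transp) = (\det \hat A\Transp)(\hat A\Transp)\mo = \hat B$.

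From the relation $\hat B = x^d (\hat A\Transp)\mo$ I would then extract $\hat A\Transp \hat B = x^d I_{n+1}$, and by transposing, $\hat B\Transp \hat A = x^d I_{n+1}$. Each of these two identities already says that the rows of the left factor are order $d$ approximants of the right factor: the rows of $\hat A\Transp$ are order $d$ approximants of $\hat B$, and the rows of $\hat B\Transp$ are order $d$ approximants of $\hat A$.

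To upgrade ``approximants'' to ``approximant basis'' I would use nonsingularity. Given any $\vec p \in \K[x]^{1 \times (n+1)}$ with $\vec p \hat B \equiv 0 \pmod{x^d}$, write $\vec p \hat B = x^d \vec q$ for some polynomial row vector $\vec q$. Since $x^d \hat B\mo = \hat A\Transp$, post-multiplying by $\hat B\mo$ gives $\vec p = \vec q \hat A\Transp$, so $\vec p$ lies in the row space of $\hat A\Transp$; combined with the previous paragraph this makes $\hat A\Transp$ an order $d$ approximant basis of $\hat B$. The symmetric argument using $x^d \hat A\mo = \hat B\Transp$ (which follows from the transposed identity) shows that $\hat B\Transp$ is an order $d$ approximant basis of $\hat A$. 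There is no real obstacle here; the whole proof rests on the single block inversion plus the routine fact that $x^d M\mo$ is polynomial exactly when $\det M$ divides $x^d$, which holds here by construction.
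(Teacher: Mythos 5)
Your proof is correct and rests on the same key identity as the paper's, namely $\hat A\Transp \hat B = x^d I_{n+1}$, but the two proofs finish differently. The paper establishes that $\hat A\Transp$ is an approximant basis indirectly: it starts from an arbitrary approximant basis $G$ of $\hat B$, uses $G\hat B = x^d R$ to deduce $x^d \mid \det G$, and then compares with $\det \hat A\Transp = x^d$ to conclude $\Row(\hat A\Transp) = \Row(G)$. You instead give a direct membership argument: from $\vec p\hat B = x^d\vec q$ you post-multiply by $\hat B^{-1}$ and use $x^d\hat B^{-1}=\hat A\Transp$ to write $\vec p = \vec q\,\hat A\Transp$ with $\vec q$ polynomial, showing every order-$d$ approximant of $\hat B$ lies in $\Row(\hat A\Transp)$. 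Your finish is a bit more elementary and self-contained (it does not invoke the existence of an approximant basis $G$ or a determinant divisibility argument), while the paper's determinant argument is shorter once one is willing to cite the existence and uniqueness-up-to-unimodular-factor of approximant bases. Both are valid; do note, though, that the parenthetical ``routine fact'' at the end of your write-up is stated too strongly --- $x^d M^{-1}$ being polynomial does not imply $\det M \mid x^d$ in general (only that $\det M \mid x^{dn}$) --- but you never actually rely on that converse direction, so the proof itself is unaffected. You should also state, for completeness, that the rows of $\hat A\Transp$ are $\K[x]$-linearly independent, which follows immediately from $\det\hat A\Transp = x^d \neq 0$ and is needed for it to be a \emph{basis} rather than merely a generating set.
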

\begin{proof}
Direct computation shows that $\hat A\Transp \hat B = x^d I_m =
\det \hat A\Transp I_m$, so $\hat B$ is the adjoint of $\hat
A\Transp$.

Let now $G$ be an approximant basis of $\hat B$.  By the above
computation the row space of $\hat A\Transp$ must be a subset of
the row space of $G$.  But since $G \hat B = (x^dI_m) R$ for some
$R \in \K[x]^{(n+1)\times(n+1)}$, then $\det G = x^d \det R$.  Thus
$x^d \mid \det G$.  But $\det \hat A\Transp = x^d$, so the row space
of $\hat A\Transp$ can not be smaller than the row space of $G$.
That is, $\hat A\Transp$ is an approximant basis for $B$ of order
$d$.  Taking the transpose through the argument shows that $\hat
B\Transp$ is an approximant basis of $\hat B$ of order $d$.
\end{proof}
\begin{theorem}
  \label{thm:dualityMinbasis}
  Let $A$ and $B$ be as follows.
  \begin{align*}
    A &= \left [ \begin{array}{cccc}
                   -\vec S \\\hlineSpace{1pt}
                   I
                 \end{array} \right ] \in \K[x]^{(n+1) \times (n+1)}
    &&&
    \hspace*{-1em}B  &= \left [ \begin{array}{c}
                    1 \\ \vec S
                  \end{array} \right] \in \K[x]^{(n+1) \times 1}
\end{align*}
If $G$ is an $\vec N$-minimal approximant basis of $B$ of order $d$ with shift
$\vec N \in \ZZ_{\geq 0}^{n+1}$, then 
$\adj(G\Transp)$ is a $(-\vec N)$-minimal
approximant basis of $A$ of order $d$.  Moreover,
if $\vec \eta = \rowdeg_{\vec N} G$, then
$\rowdeg_{-\vec N} \adj(G) =(\eta - N - \eta_1,\ldots , \eta - N
-\eta_{n+1})$, where $\eta = \sum_i \eta_i$ and $N = \sum_i N_i$.
\end{theorem}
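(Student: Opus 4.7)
The approach is to translate back and forth between approximant bases for $A,B$ and for $\hat A,\hat B$ via the duality of \cref{lem:duality}, and then reduce everything to an application of \cref{lem:adjointRowReduced}.

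First I would make the elementary observation that the approximants of order $d$ of $B$ coincide with those of $\hat B$: indeed $\hat B = \big[\, B \,\big|\, x^d \bar I\, \big]$ for a suitable zero-padding of $I$, so for any $\vec p \in \K[x]^{1\times(n+1)}$ the extra columns contribute only multiples of $x^d$ and $\vec p \hat B \equiv 0 \bmod x^d$ is equivalent to $\vec p B \equiv 0 \bmod x^d$. Symmetrically, the first column of $\hat A$ is $(x^d,0,\ldots,0)\Transp$, so approximants of $\hat A$ and $A$ of order $d$ coincide.

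Next, by \cref{lem:duality}, $\hat A\Transp$ is an approximant basis of $\hat B$ of order $d$ and $\hat B\Transp$ is an approximant basis of $\hat A$ of order $d$; combining with the previous paragraph, $\hat A\Transp$ is an approximant basis of $B$ and $\hat B\Transp$ is an approximant basis of $A$ (both of order $d$). Since $G$ is also an approximant basis of $B$ of order $d$ with the same rank $n+1$ as $\hat A\Transp$, there is a unimodular $U \in \K[x]^{(n+1)\times(n+1)}$ with $G = U\,\hat A\Transp$. Also from \cref{lem:duality} we have $\hat A\Transp \hat B = x^d I = \det(\hat A\Transp) I$, so $\adj(\hat A\Transp) = \hat B$, i.e. $\adj(\hat A) = \hat B\Transp$.

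Now I compute $\adj(G\Transp)$. Since $G\Transp = \hat A\, U\Transp$ and $\adj(XY) = \adj(Y)\adj(X)$ for square $X,Y$, we obtain
\[
  \adj(G\Transp) \;=\; \adj(U\Transp)\,\adj(\hat A) \;=\; \adj(U\Transp)\,\hat B\Transp.
\]
Because $\det U \in \K^\ast$, $\adj(U\Transp)$ is unimodular, so $\adj(G\Transp)$ has the same $\K[x]$-row space as $\hat B\Transp$; hence $\adj(G\Transp)$ is an approximant basis of $A$ of order $d$.

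Finally, $G$ being an $\vec N$-minimal approximant basis of $B$ is in particular $\vec N$-row reduced with $\rowdeg_{\vec N} G = \vec \eta$. Applying \cref{lem:adjointRowReduced} with shift $\vec s = \vec N$ (and using the identity $\adj(G)\Transp = \adj(G\Transp)$ that holds for square matrices), the matrix $\adj(G\Transp)$ is $(-\vec N)$-row reduced with row degrees $(\eta - N - \eta_1, \ldots, \eta - N - \eta_{n+1})$, where $\eta = \sum_i \eta_i$ and $N = \sum_i N_i$. Combined with the previous paragraph this yields a $(-\vec N)$-minimal approximant basis of $A$ of order $d$ with the claimed row degrees. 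The only slightly delicate step is the book-keeping with transposes and adjoints in the product formula; everything else is a direct application of results already in the paper.
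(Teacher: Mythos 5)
Your proof is correct and rests on the same two pillars the paper uses: \cref{lem:duality} for the $\hat A$/$\hat B$ duality, and \cref{lem:adjointRowReduced} for the row-degree bookkeeping. The one place where your route diverges is how you establish that $\adj(G\Transp)$ is an approximant basis of $\hat A$ of order $d$. The paper normalizes $\det G = x^d$, writes $G\hat B = x^d R$, rearranges to $\adj(G\Transp)\hat A = x^d(R\Transp)^\mo$, and then needs a separate argument ($\det R = 1$) to conclude $(R\Transp)^\mo$ is over $\K[x]$ and that the result is not merely a set of approximants but a full basis. You instead factor $G = U\hat A\Transp$ with $U$ unimodular, use multiplicativity of the adjoint and $\adj(\hat A) = \hat B\Transp$ from \cref{lem:duality}, and get $\adj(G\Transp) = \adj(U\Transp)\hat B\Transp$ with $\adj(U\Transp)$ unimodular — so the row-space equality with $\hat B\Transp$ is immediate. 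This is a little cleaner: it bypasses the determinant-of-$R$ computation and makes the ``same row space as a known approximant basis'' conclusion explicit rather than implicit. Your opening observation that approximants of $B$ and $\hat B$ (resp. $A$ and $\hat A$) of order $d$ coincide is the same reduction the paper makes, just spelled out; it is correct because the extra columns of $\hat B$ are $x^d$ times a constant matrix, and the extra column of $\hat A$ is $(x^d, 0, \ldots, 0)\Transp$.
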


\begin{proof}
  Introduce $\hat A$ and $\hat B$ as in \cref{lem:duality}.
  Clearly $G$ is also an $\vec N$-minimal approximant basis of $\hat B$ of order $d$.
  Likewise, $\hat A$ and $A$ have the same minimal approximant bases for given order and shift.

  Assume, without loss of generality, that we have scaled $G$ such
  that $\det G$ is monic.  Since $\hat A\Transp$ is also an approximant
  basis for $\hat B$ of order $d$, then $\det G = \det \hat A\Transp
  = x^d$.  By definition $G\hat B = x^d R$ for some matrix $R \in
  \K[x]^{(n+1)\times(n+1)}$.  That means
  \begin{align*}
    x^{2d}((G\hat B)\Transp))^\mo &= x^{2d}((x^d R)\Transp)^\mo \ , & \textrm{so} \\ 
    (x^d(G\Transp)^\mo)(x^d(\hat B\Transp)^\mo) &= x^{d}(R\Transp)^\mo \ , & \textrm{that is} \\ 
    \adj(G\Transp) \hat A &= x^d (R\Transp)^\mo \ .
  \end{align*}
Now $\det R = 1$ since $(x^d)^{n+1} \det R = \det(G\hat B) =
x^{d+nd}$, so $(R\Transp)^\mo = \adj(R\Transp) \in \K[x]^{(n+1)
\times (n+1)}$.
Therefore $\adj(G\Transp)$ is an approximant basis of $\hat A$
of order $d$.   
The theorem now follows from \cref{lem:adjointRowReduced}
by noting that $G$ is $\vec N$-row reduced.
\end{proof}

\begin{example}
  We apply \cref{thm:dualityMinbasis} to the problem of \cref{ex:simpade} with shifts $\vec N = (5, 3, 4, 5)$.
  We have
  \begin{align*}
    A &= 
      \left[\begin{array}{rrr}
      x^{4} + x^{2} + 1 & x^{4} + 1 & x^{4} + x^{3} + 1 \\
      1 &   &   \\
        & 1 &   \\
        &   & 1
      \end{array}\right]
    \\
    B &= \left[\begin{array}{r}
             1 \\ x^4 + x^2 + 1 \\ x^{4} + 1 \\ x^{4} + x^{3} + 1
         \end{array}\right]
  \end{align*}
  An $\vec N$-minimal approximant basis to order $d = 5$ of $B$ is
  \begin{align*}
    G &= 
  \left[\begin{array}{rrrr}
  x & 0 & x & 0 \\
  1 & x^{2} + 1 & 0 & 0 \\
  0 & 1 & x^{2} + 1 & 0 \\
  0 & x & x + 1 & 1
  \end{array}\right] ,  \textrm{ and}
  \\
  \adj(G)\Transp &= 
    \left[\begin{array}{rrrr}
    x^{4} + 1 & x^{2} + 1 & 1 & x^{3} + 1 \\
    x & x^{3} + x & x & x^{4} + x \\
    x^{3} + x & x & x^{3} + x & x^{4} + x^{3} + x \\
    0 & 0 & 0 & x^{5}
    \end{array}\right]
    \ .
  \end{align*}
  $\adj(G)\Transp$ can be confirmed to be an $(-\vec N)$-minimal approximant basis of
$A$, since $\adj(G)\Transp A \equiv 0 \mod x^d$, and since the $(-\vec N)$-leading coefficient matrix of $\adj(G)\Transp$ has full rank.
\end{example}

Algorithm~\ref{alg:simpadedual} uses \cref{thm:dualityMinbasis} to solve a Simultaneous \Pade approximation by computing a minimal approximant basis of $B$ in Popov form.

\begin{algorithm}[t]
  \caption{\algoname{DualitySimPade}}
  \label{alg:simpadedual}
  \begin{algorithmic}[1]
    \Require{$(\vec S, (x^d,\ldots, x^d), \vec N)$, an instance of 
 \cref{prob:sim_pade_basis} of size~$n$.
    }
    \Ensure{$(\vec \lambda, \vec \delta)$, solution specification.}
    \State $B \ass [ 1, S_1, \ldots, S_n ]^T \in \K[x]^{(n+1) \times 1}$
    \State $G \ass \PopovBasis(d, B, \vec N)$
      \label{line:simpadedual:basis}
    \State $\vec \eta \ass \rowdeg_{\vec N} G$
    \State $\hat{\vec \lambda} \ass $ first column of $\adj(G\Transp)$
      \label{line:simpadedual:firstcol}
    \State $\hat{ \vec\delta} \ass (\eta - N -\eta_1, \ldots, \eta - N - \eta_{n+1})$,
    where $\eta = \sum_i \eta_i$  and $N = \sum_i N_i$
      \label{line:simpadedual:degrees}
    \State $I \ass \{ i \mid \hat{\vec\delta}_i < 0 \}$, and $k \ass |I|$
    \State $(\vec \lambda,\ \vec \delta) \ass \big( \hat{\vec\lambda}_{i \in I},\ (\hat{\vec\delta}_i)_{i \in I} 
\big) \in \K[x]^{k \times 1} \times \ZZ^{k} $
    \State $\return (\vec \lambda, \vec \delta)$
  \end{algorithmic}
\end{algorithm}

\begin{theorem} \cref{alg:simpadedual} is correct.
The cost of the algorithm is $O(n^{\omega-1}\, \M(d) (\log d) (\log d/n)^2)$
operations in $\K$.
\end{theorem}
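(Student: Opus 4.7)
The plan is to split the argument into correctness and complexity. For correctness I would first invoke \cref{thm:dualityMinbasis} to conclude that the matrix $\adj(G\Transp)$ --- which the algorithm never forms in full --- is a $(-\vec N)$-minimal approximant basis of $A$ of order $d$, with $(-\vec N)$-row degrees exactly the tuple $\hat{\vec\delta}$ assembled on \cref{line:simpadedual:degrees}. Appealing then to the ``Via minimal approximant basis'' discussion in \cref{sec:direct_solve}, specialised to $g_i = x^d$, the negative part of any such basis is a solution basis for the simultaneous \Pade problem. The algorithm extracts precisely this negative part: it picks the indices $i \in I$ with $\hat{\vec\delta}_i < 0$ and returns the corresponding entries of the first column of $\adj(G\Transp)$ together with the corresponding entries of $\hat{\vec\delta}$, which by definition is a valid solution specification.

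For the cost bound the dominant step is \cref{line:simpadedual:basis}. Since $B \in \K[x]^{(n+1) \times 1}$ has column dimension $m = 1$, \cref{thm:orderbasis} applies with $\sigma = d$ and yields $O\bigl((n+1)^{\omega-1}\, \M(d)(\log d)(\log d/(n+1))^2\bigr)$, matching the claimed bound. \cref{line:simpadedual:firstcol} only asks for the first column of $\adj(G\Transp)$, which equals the transpose of the first row of $\adj(G)$; since $G$ is an order $d$ approximant basis of a single column we have $\det G = x^D$ with $D \leq d$, and \cref{thm:fastsolver} then computes the first row of $\adj(G)$ in $O\bigl(n^{\omega-1}\, \M(d)(\log d)(\log d/n)\bigr)$, absorbed by the above. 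The tuple manipulations on the remaining lines cost $O(n)$.

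The main obstacle I anticipate is verifying that \cref{thm:fastsolver} really applies when $G$ is in $\vec N$-Popov rather than plain Popov form. Its proof relies on two structural facts: that $D$ equals the sum of the diagonal degrees of $G$, and that any column of $G$ of degree zero is a standard basis vector. Inspection of \cref{def:popov} shows both remain true under an arbitrary shift, since the off-diagonal condition $\deg G_{ij} < \deg G_{jj}$ is itself shift-free and a shifted leading matrix that is unit lower triangular forces each diagonal leading coefficient to be $1$. Once this observation is recorded, the proof is a routine assembly of the cited results.
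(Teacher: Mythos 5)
Your proof is correct and follows essentially the same route as the paper's own (one-line) argument: correctness via \cref{thm:dualityMinbasis}, and complexity by invoking \cref{thm:orderbasis} with $m=1$ so $\sigma=d$ for \cref{line:simpadedual:basis}, and \cref{thm:fastsolver} for \cref{line:simpadedual:firstcol}. Your extra observation that \cref{thm:fastsolver} still applies when $G$ is in $\vec N$-Popov rather than unshifted Popov form is a valid and worthwhile clarification---the paper's own proof of that theorem already reasons with ``$\vec s$-Popov form,'' so the statement should indeed have been phrased for general shifts, and your two structural checks (determinant degree equals the sum of diagonal degrees, and degree-zero columns are standard basis vectors) are exactly what is needed.
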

\begin{proof}
Correctness follows from \cref{thm:dualityMinbasis}.  The complexity
estimate is achieved if the algorithms supporting \cref{thm:orderbasis}
and \cref{thm:fastsolver} are used for the computation in lines 2
and 4, respectively.
\end{proof}

\section{A Divide \& Conquer algorithm}
\label{sec:intersect}

Our second algorithm can handle the full generality of \cref{prob:sim_pade_basis}.
It works by first solving $n$ single \Pade approximations, one for each of the $S_i$ individually, and then intersecting these solutions to form approximations of multiple $S_i$ simultaneously.
The intersection is structured in a Divide \& Conquer tree, and performed by computing minimal approximant bases.
Let $(\vec S, \vec g, \vec N)$ be an instance of \cref{prob:sim_pade_basis}
of size $n$.

The idea of the intersection algorithm is the following:
consider that we have solution specifications for two different Simultaneous \Pade problems, $(\vec\lambda_1, \vec\delta_1)$ and $(\vec\lambda_2, \vec\delta_2)$.
We then compute an approximant basis $G$ of the following matrix:
\begin{equation}
  \label{eqn:intersect_R}
    R =
    \left[\begin{array}{@{}c|c@{}}
            1 & 1 \\ \hline
            -\vec\lambda_1 &        \\\hline
                   & -\vec\lambda_2 \\
    \end{array}\right]
\end{equation}
$G$ then encodes the \emph{intersection} of the $\K[x]$-linear combinations of the $\vec\lambda_1$ with the $\K[x]$-linear combinations of the $\vec\lambda_2$:
any $\lambda \in \K[x]$ residing in both sets of polynomials will appear as the first entry of a vector in the row space of $G$.
We compute $G$ as an $\vec r$-minimal approximant basis to high enough order, where $\vec r$ is selected carefully such that the $\vec r$-degree of any $(\lambda \mid \ldots) \in \Row(G)$ will equal the $(-\vec N)$-degree of the completion of $\lambda$ according to the combined Simultaneous \Pade problem, whenever this degree is negative.
From those rows of $G$ with negative $\vec r$-degree we then get a solution specification for the combined problem.

\begin{example}
  Consider again \cref{ex:simpade}.
  We divide the problem into two sub-problems $\vec S_1 = (S_1, S_2)$, $\vec N_1 = (5, 3, 4)$, and $\vec S_2 = (S_3)$ and $\vec N_2 = (5, 5)$.
  Note that $N_{1,0} = N_{2,0} = 5$, since this is the degree bound on the sought $\lambda$ for the combined problem.
  The sub-problems have the following solution specifications and their completions:
  \begin{align*}
    (\vec\lambda_1, \vec \delta_1) &= \big( [ x^{4} + 1,\ x^3 + x ]\Transp,\ ( -1, -1 ) \big)
    \\
      A_1 &=
        \left(\begin{array}{rrr}
        x^{4} + 1 & x^{2} + 1 & 1 \\
        x^{3} + x & x & x^{3} + x
        \end{array}\right)
    \\
    (\vec\lambda_2, \vec \delta_2) &= \big( [ x^2,\ x^3 + x + 1 ]\Transp,\ ( -3, -2 ) \big)
    \\
      A_2 &=
        \left(\begin{array}{rr}
        x^{2} & x^{2} \\
        x^{3} + x + 1 & x + 1
        \end{array}\right)
  \end{align*}
  We construct $R$ as in \eqref{eqn:intersect_R}, and compute $G$, a minimal approximant basis of $R$ of order $7$ and with shifts $\vec r= (-5 \mid -1, -1 \mid -3, -2)$ (the $G$ below is actually in $\vec r$-Popov form):
  \[
    G =  \left(\begin{array}{rrrrr}
    x^{8} & 0 & 0 & 0 & 0 \\
    x^{3} + x + 1 & x^{4} + 1 & 1 & 0 & 1 \\
    x^{3} + x^{2} + x + 1 & 1 & x + 1 & 1 & 1 \\
    x^{4} + x^{3} + x + 1 & 1 & 1 & x^{2} & 1 \\
    x^{4} + 1 & 1 & 0 & x + 1 & x + 1
    \end{array}\right)
  \]
  $G$ has $\vec r$-row degree $(3, 3, 0, -1, -1)$.
  Only rows 4 and 5 have negative $\vec r$-degree, and their first entries are the linearly independent solutions $x^4 + x^3 + x + 1$ and $x^4 + 1$.
  Both solutions complete into vectors with $(-\vec N)$-degree -1.
\end{example}

To prove the correctness of the above intuition, we will use \cref{alg:simpadedirect} (\algoname{DirectSimPade}).
The following lemma says that to solve two simultaneous \Pade approximations, one can compute a minimal approximant basis of one big matrix $A$ constructed essentially from two of the matrices employed in \algoname{DirectSimPade}.
Afterwards, \cref{lem:recursive_R_solves} uses this to show that a minimal approximant basis of $R$ in \eqref{eqn:intersect_R} provides the crucial information in a minimal approximant basis of $A$.

\begin{algorithm}[t]
  \caption{\algoname{RecursiveSimPade}}
  \label{alg:recsimpade}
  \begin{algorithmic}[1]
   \Require{$(\vec S, \vec g, \vec N)$, an instance of
   \cref{prob:sim_pade_basis} of size $n$.}
    \Ensure{$(\vec \lambda,  \vec \delta)$, a solution specification.}
    \If{$n=1$} 
     \State $\return \algoname{DirectSimPade}(\vec S, \vec g, \vec N)$
    \Else
      \State $\vec S_1, \vec g_1 \ass $
              the first $\ceil{n/2}$ elements of $\vec S, \vec g$
      \State $\vec S_2, \vec g_2 \ass $
              the last $\floor{n/2}$ elements of $\vec S, \vec g$
      \State $\vec N_1
         \ass (N_0,N_1,\ldots,N_{\ceil{n/2}})$
      \State $\vec N_2  
         \ass  (N_0,N_{\ceil{n/2}+1},\ldots,N_n)$
      \State $(\vec \lambda_1, \vec \delta_1) \ass \algoname{RecursiveSimPade}\big(\vec S_1, \vec g_1, \vec N_1)$
      \State $(\vec \lambda_2, \vec \delta_2) \ass \algoname{RecursiveSimPade}
   \big(\vec S_2, \vec g_2, \vec N_2)$
    \State $\vec r \ass (-N_0 \mid \vec \delta_1 \mid \vec \delta_2)$
    \State $d \ass N_0 + \max_i \deg g_i - 1$ \label{line:recursive:choosed}
    \State \label{lineH} $R \ass
    \left[\begin{array}{c|c}
            1 & 1\\ \hline
            -\vec \lambda_1 &  \\\hline
              & -\vec \lambda_2 
    \end{array}\right]$
    \State $(\left [\begin{array}{c|c} \vec \lambda & \ast \end{array} \right ],
 \vec \delta) \ass \NegMinBasis(d, R, \vec r)$
\label{calltoNegMin}
    \State $\return (\vec \lambda, \vec \delta)$
    \EndIf
  \end{algorithmic}
\end{algorithm}

\begin{lemma}
  \label{lem:recursive_big_matrix}
  Let $(\vec S_1, \vec g_1,\vec N_1)$  and $(\vec S_2, \vec g_2,\vec N_2)$ be two instances of \cref{prob:sim_pade_basis} of lengths $n_1, n_2$ respectively, and where $\vec N_1 = (N_0 \mid \grave{\vec N_1})$ and $\vec N_2 = (N_0 \mid \grave{\vec N_2})$.
  Let $\vec S = (\vec S_1 \mid \vec S_2)$, $\vec g = (\vec g_1 \mid \vec g_2)$ and $\vec N = (N_0 \mid \grave{\vec N_1} \mid \grave{\vec N_2})$ be the combined problem having length $n = n_1 + n_2$.

  Let $\vec h_i = (-\vec N_i \mid N_0 -1 \ldots N_0 -1) \in \ZZ^{2n_i+1}$ for $i=1,2$.
  Let $(F, \vec\delta) = \NegMinBasis(d, A, \vec a)$, where $A$ of dimension $(2n+3) \times (n+2)$ is given as:
  \[
    A =  \left [ \begin{array}{c|c}
                   A_1          & A_2
                 \end{array} \right ]
      = \left [ \begin{array}{cc|cc}
                                &                 & 1 & 1  \\
                -\vec S_1       &                 & -1     \\
                I               &                 &        \\
                \diagg[1] &                 &        \\
                                & -\vec S_2       &   & -1 \\
                                & I               &        \\
                                & \diagg[2] & 
                \end{array} \right ] ,
  \]
  with $\vec a = (- N_0 \mid \vec h_1 \mid \vec h_2)$ and $d = N_0 + \max_i \deg g_i - 1$.
  Then $(\vec \lambda, \vec \delta)$ is a solution specification to $(\vec S, \vec g, \vec N)$, where $\vec \lambda$ is the first column of $F$.
\end{lemma}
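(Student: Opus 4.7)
My plan is to mimic the proof of \cref{thm:simpadedirect} and set up a degree-preserving bijection between order-$d$ approximants of $A$ with negative $\vec a$-degree and solutions to the combined problem $(\vec S, \vec g, \vec N)$. First I would write a generic row of an approximant as $\vec v = (\alpha \mid \lambda_1, \vec\phi_1, \vec q_1 \mid \lambda_2, \vec\phi_2, \vec q_2)$ matching the partition of $A$'s columns, and observe that $\vec v A \equiv 0 \bmod x^d$ splits into two ``local'' approximant systems coming from the two $H_i$ sub-blocks in block columns one and two, plus two ``linking'' congruences $\alpha \equiv \lambda_1$ and $\alpha \equiv \lambda_2$ modulo $x^d$ coming from the last two columns.

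Under the assumption $\deg_{\vec a} \vec v < 0$, the shifts in $\vec a$ impose precisely the degree bounds used in the proof of \cref{thm:simpadedirect} on each sub-block, so the same low-degree argument will turn each local congruence into an exact equality $\lambda_i S_{i,j} = \phi_{i,j} + q_{i,j} g_{i,j}$; the linking congruences will likewise collapse to $\alpha = \lambda_1 = \lambda_2 =: \lambda$, as all three sides have degree strictly less than $N_0 \le d$. That will yield $\lambda S_{i,j} \equiv \phi_{i,j} \bmod g_{i,j}$ with $\deg \phi_{i,j} < N_{i,j}$, so $\phi_{i,j} = \rem(\lambda S_{i,j}, g_{i,j})$ using $N_{i,j} \le \deg g_{i,j}$, making $\bar{\vec v} := (\lambda, \vec\phi_1, \vec\phi_2)$ a solution to the combined problem that coincides with the completion of $\lambda$. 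The converse direction should follow by setting $\alpha = \lambda$ and $q_{i,j} = (\lambda S_{i,j} - \phi_{i,j})/g_{i,j}$, and the degree-preservation $\deg_{\vec a} \vec v = \deg_{-\vec N} \bar{\vec v}$ will follow because $\deg S_{i,j} < \deg g_{i,j}$ forces $\deg q_{i,j} \le \deg \lambda - 1$, so the $q$-coordinates never dominate the shifted degree.

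Finally I would invoke the defining properties of $\NegMinBasis(d, A, \vec a)$: the rows of $F$ generate those order-$d$ approximants of $A$ whose $\vec a$-degree is negative, form an $\vec a$-row reduced basis, and their $\vec a$-row degrees are $\vec \delta$. Transporting along the bijection will then give that the completions of $\vec \lambda$ (the first column of $F$) generate all solutions of the combined problem and carry $(-\vec N)$-row degrees $\vec \delta$, which is exactly a solution specification. The hard part will be the last step: checking that $\vec a$-row reducedness of $F$ transfers to $(-\vec N)$-row reducedness of its completion. I expect this to follow from the coordinate-matching of the previous step, by noticing that the $\alpha$-, $\lambda_1$- and $\lambda_2$-columns of $F$ agree (they all equal $\vec \lambda$) and carry the same shift $-N_0$, and that whenever a $q$-column contributes to the $\vec a$-leading matrix of $F$ its contribution is forced to be a $\K$-scalar multiple of the $\alpha$-contribution in the same row; removing the duplicated and dependent columns thus leaves the rank unchanged and produces precisely the $(-\vec N)$-leading matrix of the completion.
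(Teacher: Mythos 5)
Your approach takes a genuinely different route from the paper's. The paper reduces \cref{lem:recursive_big_matrix} to \cref{thm:simpadedirect} by matrix manipulation: it right-multiplies $A$ by a unimodular matrix whose determinant is coprime to $x$ (which preserves order-$d$ approximant bases), permutes rows, observes that the two linking columns force three columns of any negative approximant basis $F'$ to coincide, and then identifies the remaining columns as a valid output of $\NegMinBasis(d, H, \vec h)$ for the matrix $H$ of \cref{thm:simpadedirect}, at which point that theorem does the rest. You instead replay the bijection argument of \cref{thm:simpadedirect}'s proof from scratch on the larger matrix $A$. Your route is more self-contained and makes the correspondence between negative-degree approximants and solutions explicit; the cost is that you must re-prove, rather than inherit, the most delicate step of \cref{thm:simpadedirect}.

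That delicate step is the rank transfer, and it is precisely where your sketch is imprecise. The claim that ``whenever a $q$-column contributes to the $\vec a$-leading matrix of $F$ its contribution is forced to be a $\K$-scalar multiple of the $\alpha$-contribution in the same row'' is too weak as stated: a row-by-row scalar relation, with a scalar possibly varying with the row, says nothing about column dependence and hence nothing about the rank after deleting columns. What you need is a column relation whose coefficients are determined by the instance, not the row. Writing $e = \deg g_{i,j}$ and using the exact identity $q\, g_{i,j} = \lambda S_{i,j} - \phi$, one checks that the $q_{i,j}$-column of ${\rm LM}_{\vec a}(F)$ equals the $\alpha$-column scaled by (the coefficient of $x^{e-1}$ in $S_{i,j}$) divided by the leading coefficient of $g_{i,j}$, minus a contribution from the $\phi_{i,j}$-column that appears exactly in the degenerate case $N_0 = 1$ and $N_{i,j} = e$. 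So the column relation involves both the $\alpha$- and the $\phi_{i,j}$-columns, not $\alpha$ alone, and the coefficients are row-independent. That is what lets you delete the $q$-columns together with the duplicated $\lambda$-columns without losing rank, giving $(-\vec N)$-row reducedness of the completion. The paper's reduction to \cref{thm:simpadedirect} sidesteps this entirely, since the analogous fact is already packaged there in the assertion that $MK_1$ inherits full row rank and row degrees from $G$. Once you replace the row-wise scalar claim with this column relation, your argument closes.
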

\begin{proof}
  Note that the matrix $A$ is right equivalent to the following matrix $B$:
  \[
  B := A 
      \left [ \begin{array}{cccc} 
           &    & I               &           \\
           &    &                 & I         \\
        1  &    & \vec S_1        &           \\
           & 1  &                 & \vec S_2
      \end{array} \right ] = 
    \left [ \begin{array}{cc|cc}
       1   & 1  & - \vec S_1      & -\vec S_2 \\
        -1 &    &                 &           \\
           &    & I               &           \\
           &    & \diagg[1] &           \\
           & -1 &                 &           \\
           &    &                 & I         \\
           &    &                 & \diagg[2]
    \end{array} \right ].
  \]
  Since $F$ is an ${\vec a}$-minimal approximant of $A$ of order $d$, then it will also be one for $B$.
  Let $P$ be the permutation matrix that produces the following 
  matrix $C := P B$:
  \[
\setlength{\arraycolsep}{.8\arraycolsep}
  C = P B = 
    \left [ \begin{array}{cc|cc}
    1  & 1  & - \vec S_1      & -\vec S_2       \\
       &    & I               &                 \\ 
       &    &                 & I               \\
       &    & \diagg[1] &                 \\
       &    &                 & \diagg[2] \\\hline
    -1 &    &                 &                 \\
       & -1 &                 & 
    \end{array}  \right ]  =
    \left [ \begin{array}{cc|c}
    1  & 1  & - \vec S                          \\     &  & I \\ &  & \diagg
                                                \\\hline  
    -1 &    &                                   \\
       & -1 & 
    \end{array}  \right ].
  \]
  Define $\vec c := \vec a P^{-1}$, and note that $\vec c = (\vec h \mid -N_0, -N_0)$.  
  Since $F = \NegMinBasis(d, A, \vec a)$, then $(FP^{-1}, \vec \delta)$ is a valid output of $\NegMinBasis(d, C, \vec c)$.
  Furthermore, since the first column of $P$ is $(1, 0, \ldots, 0)$, the first column of $F$ will be equal to the first column of $FP^\mo$.

  We are therefore finished if we can show that if $(F', \vec\delta')$ is any valid output of $\NegMinBasis(d, C, \vec c)$, then the first column of $F'$ together with $\vec \delta'$ form a solution specification to $(\vec S, \vec g, \vec N)$.

  \jsrn{I subtly changed the proof: we can't appeal to \cref{lem:paderec} because we need to prove the above statement for any possible output of $\NegMinBasis$, and not just the ones that decompose according to \cref{lem:paderec}.}

  Consider therefore such an $(F', \vec \delta')$.
  By the first two columns of $C$, we must have $F'_{*,1} \equiv F'_{*,2n+2} \equiv F'_{*,2n+3} \mod x^d$, where $F'_{*,i}$ denotes the $i$'th column of $F'$.
  Since each row of $F'$ have negative $\vec c$-degree, and since $N_0 < d$, then the congruences must lift to equalities.
  We can therefore write $F = [ G \mid F'_{*,1} \mid F'_{*,1} ]$ for some $G \in \K[x]^{k \times (2n+1)}$ for some $k$, and we have $\rowdeg_{\vec h} G = \rowdeg_{\vec c} F' = \vec \delta'$.

  By the last $n$ columns of $C$, we have $G H \equiv 0 \mod x^d$, where
  \[
    H =
    \left[\begin{array}{c}
            -\vec S \\
            I \\
            \diagg
          \end{array}\right] \ .
  \]
  In fact, $(G, \vec \delta')$ is a valid output for $\NegMinBasis(d, H, \vec h)$: for $G$ has full row rank since $F'$ does; $G$ is $\vec h$-row reduced since $F'$ is $\vec c$-row reduced; and any negative $\vec h$-order $d$ approximant of $H$ must clearly be in the span of $G$ since $F'$ is a negative $\vec c$-minimal approximant basis of $C$.

  By the choice of $d$, then \cref{thm:simpadedirect} therefore implies that the first column of $G$ together with $\vec \delta'$ form a solution specification to the problem $(\vec S, \vec g, \vec N)$.
  Since the first column of $G$ is also the first column of $F'$, this finishes the proof.
\end{proof}

\begin{lemma}
  \label{lem:recursive_R_solves}
  In the context of \cref{lem:recursive_big_matrix}, let $(\vec \lambda_1, \vec \delta_1)$ and  $(\vec \lambda_2, \vec \delta_2)$  be solution specifications to the two sub-problems, and let $\vec r = (-N_0 \mid \vec \delta_1 \mid \vec \delta_2)$.
  If $([ \vec\lambda \mid * ], \vec\delta) = \NegMinBasis(d, R, \vec r)$, where $\vec \lambda$ is a column vector and
  \[
    R = 
    \left[\begin{array}{c|c}
            1 & 1\\ \hline
            -\vec \lambda_1 &  \\\hline
              & -\vec \lambda_2 
          \end{array}\right] \ ,
  \]
  then $(\vec\lambda, \vec\delta)$ is a solution specification for the combined problem.
\end{lemma}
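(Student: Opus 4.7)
The plan is to reduce to \cref{lem:recursive_big_matrix} by realising the computation of $\NegMinBasis(d,R,\vec r)$ as the second stage in a two-stage computation of $\NegMinBasis(d,A,\vec a)$, where $A$ and $\vec a$ are the matrix and shift from \cref{lem:recursive_big_matrix}. Concretely, split the columns of $A$ as $A = [A_1 \mid A_2]$, where $A_1$ consists of the first $n$ columns (the two $H_i$-like blocks) and $A_2$ consists of the last two coupling columns. I will invoke \cref{lem:paderecprune}, computing $F_1 = \NegMinBasis(d, A_1, \vec a)$ first and then $F_2 = \NegMinBasis(d, F_1 A_2, \rowdeg_{\vec a} F_1)$.

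The key observation is that $A_1$ is block diagonal: the coupling row of $A$ is zero in $A_1$, and the rest splits into an $H_1$ block (rows $2$ through $2n_1+2$, columns $1$ through $n_1$) and an $H_2$ block (the remaining rows and columns). Hence there exists a valid output of $\NegMinBasis(d,A_1,\vec a)$ that decomposes as the direct sum of three pieces: the row $e_1 = (1,0,\ldots,0)$ with $\vec a$-degree $-N_0$; the embedded negative part $G_1$ of some $\vec h_1$-minimal approximant basis of $H_1$ of order $d$; and similarly an embedded $G_2$. By \cref{thm:simpadedirect}, the first columns of $G_1$ and $G_2$ are precisely (representatives of) $\vec \lambda_1$ and $\vec \lambda_2$, and their shifted row degrees are $\vec \delta_1$ and $\vec \delta_2$. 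Consequently $\rowdeg_{\vec a} F_1 = (-N_0 \mid \vec \delta_1 \mid \vec \delta_2) = \vec r$, and a direct computation---using that the only nonzero entries of $A_2$ lie in row $1$ (with value $(1,1)$), row $2$ (with value $(-1,0)$) and row $n_1+3$ (with value $(0,-1)$)---shows that $F_1 A_2$ equals $R$ up to a harmless row permutation consistent with $\vec r$.

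Given the hypothesised output $([\vec \lambda \mid *], \vec \delta) = \NegMinBasis(d, R, \vec r)$, applying \cref{lem:paderecprune} to the two stages above yields that $F_2 F_1$ is a valid output of $\NegMinBasis(d, A, \vec a)$ with shifted row degrees $\vec \delta$. Because the first column of $F_1$ is zero except for a single $1$ in the $e_1$-row, the first column of $F_2 F_1$ coincides with the first column of $F_2$, which is $\vec \lambda$ by hypothesis. Now \cref{lem:recursive_big_matrix} applies and tells us that $(\vec \lambda, \vec \delta)$ is a solution specification for the combined problem $(\vec S, \vec g, \vec N)$, which is what we wanted. The only mildly delicate step is the direct-sum structure of the stage-one basis; this is justified by the block diagonal structure of $A_1$ together with the fact that any collection of $\vec s$-minimal approximant bases of the blocks, extended by zero, forms an $\vec s$-minimal approximant basis of the whole, and the $\NegMinBasis$ variant respects this decomposition row-wise.
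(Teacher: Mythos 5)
Your proof is correct and follows essentially the same route as the paper: identify the same column split $A = [A_1 \mid A_2]$ from \cref{lem:recursive_big_matrix}, build the block-diagonal $F_1 = \operatorname{diag}(1, G_1, G_2)$ from \cref{thm:simpadedirect}, observe $F_1 A_2 = R$, and invoke \cref{lem:paderecprune} to conclude via \cref{lem:recursive_big_matrix}. Two trivia worth flagging: the second $-1$ in $A_2$ sits in row $2n_1+3$, not $n_1+3$; and $F_1 A_2$ equals $R$ exactly (no row permutation is needed), since the single nonzero rows of $A_2$ that the $G_1$ and $G_2$ blocks of $F_1$ ``see'' are precisely the $-\vec S_1$ and $-\vec S_2$ rows, and those picks out exactly the first columns $\vec\lambda_1$ and $\vec\lambda_2$.
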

\begin{proof}
  We will prove the lemma by using \cref{lem:paderecprune} to relate valid outputs of $\NegMinBasis(d, R, \vec r)$ with valid outputs of $\NegMinBasis(d, A, \vec a)$ from \cref{lem:recursive_big_matrix}.

  For $i=1,2$, since $(\vec \lambda_i, \vec \delta_i)$ is a solution specification to the $i$'th problem, then by \cref{thm:simpadedirect} there is some $G_i \in \K[x]^{k_i \times 2n_i+1}$ whose first column is $\vec\lambda_i$ and such that $G_i$ is a valid output of $\NegMinBasis(d, H_i, \vec h_i)$, where
  \[
  H_i = \left [ \begin{array}{c}
                  -\vec S_i \\
                  I \\
                  \diagg[i]
                \end{array} \right ] \in \K[x]^{(2n_i+1) \times n_i} ,
  \]
  and $\vec h_i$ is as in \cref{lem:recursive_big_matrix}.
  Note now that if
  \[
    F_1 := \left [\begin{array}{ccc}
                    1 & &   \\
                      & G_1 \\
                      & & G_2
                  \end{array} \right ] \in \K[x]^{(k_1+k_2+1) \times (2n_1 + 2n_2 + 3)} ,
  \]
  then $(F_1, \vec r)$ is a valid output of $\NegMinBasis(d, A_1, \vec a)$: for $\rowdeg_{\vec a} F_1$ is clearly $\vec r$; $F_1$ has full row rank and is $\vec r$-row reduced; and the rows of $F_1$ must span all $\vec a$-order $d$ approximants of $A_1$, since the three column ``parts'' of $F_1$ correspond to the three row parts of $A_1$. \jsrn{Too informal?}.

  Note now that $F_1 A_2 = R$.
  Thus by \cref{lem:paderecprune}, if $(F_2, \vec \delta) = \NegMinBasis(d, R, \vec r)$, then $(F_2 F_1, \vec \delta)$ is a valid output of $\NegMinBasis(d, A, \vec a)$.
  Note that by the shape of $F_1$ then the first column $\vec\lambda$ of $F_2 F_1$ is the first column of $F_2$.
  Thus $\vec\lambda, \vec\delta$ are exactly as stated in the lemma, and by \cref{lem:recursive_big_matrix} they must be a solution specification to the combined problem.
\end{proof}
\begin{theorem}
  \cref{alg:recsimpade} is correct.  The cost of
the algorithm is 
$O(n^{\omega-1}\, \M(d) (\log d) (\log d/n)^2)$,
$d = \max_i \deg g_i$.
\end{theorem}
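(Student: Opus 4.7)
The plan is to prove correctness by induction on $n$ and then bound the cost by summing costs across the binary recursion tree.

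\textbf{Correctness.} For the base case $n = 1$, the algorithm returns $\algoname{DirectSimPade}(\vec S, \vec g, \vec N)$, which is a solution specification by \cref{thm:simpadedirect}. For the inductive step $n > 1$, the two recursive calls return by hypothesis solution specifications $(\vec \lambda_i, \vec \delta_i)$ for $i = 1, 2$ to the two sub-problems produced by splitting $(\vec S, \vec g, \vec N)$. The shift $\vec r$, the order $d$, and the matrix $R$ built by the algorithm match the hypotheses of \cref{lem:recursive_R_solves} verbatim, so that lemma directly implies that the $\NegMinBasis$ call on \cref{calltoNegMin} returns a solution specification for the combined problem.

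\textbf{Complexity.} Write $\bar d := \max_i \deg g_i$. The algorithm's order parameter from \cref{line:recursive:choosed} satisfies $d = N_0 + \bar d - 1 \leq 2\bar d$ using $N_0 \leq \bar d$ from \cref{prob:sim_pade}. Consider an internal node of the recursion tree handling a sub-problem of size $m$; its two children have sizes $\lceil m/2 \rceil$ and $\lfloor m/2 \rfloor$, so each returned $\vec \lambda_i$ has at most $\lceil m/2 \rceil + 1$ rows (a solution basis for a size-$k$ problem has at most $k + 1$ rows; see \cref{sec:direct_reduced_basis}). Consequently the matrix $R$ built on \cref{lineH} has two columns and $O(m)$ rows, and by \cref{thm:orderbasis} the $\NegMinBasis$ call at that node costs $O\bigl(m^{\omega-1}\,\M(\bar d)\,(\log \bar d)\,(\log(\bar d/m))^2\bigr)$.

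\textbf{Summing over the tree.} Multiplying by the $n/m$ nodes at the level where sub-problems have size $m$ gives a per-level cost of $O\bigl(n\,m^{\omega-2}\,\M(\bar d)\,(\log \bar d)\,(\log(\bar d/m))^2\bigr)$. Since $\omega > 2$ the factor $m^{\omega-2}$ grows geometrically with $m$ while $(\log(\bar d/m))^2$ shrinks only logarithmically, so summing over $m = 2, 4, \ldots, n$ is dominated by the root level $m = n$ and yields the target bound $O\bigl(n^{\omega-1}\,\M(\bar d)\,(\log \bar d)\,(\log(\bar d/n))^2\bigr)$. The $n$ base-case calls to $\algoname{DirectSimPade}$ each cost $O(\M(\bar d)(\log \bar d)^3)$ by \cref{thm:orderbasis}, contributing a total of $O(n \M(\bar d)(\log \bar d)^3)$ which is absorbed. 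The only point requiring care is the row-count bound on the $\vec \lambda_i$, so that $R$'s row dimension scales with $m$ rather than $n$; this is what allows the geometric sum over levels to produce the same complexity as the dual algorithm of \cref{sec:dual}.
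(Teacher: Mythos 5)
Your proof is correct and follows essentially the same route as the paper's: correctness via \cref{lem:recursive_R_solves} (the paper treats the induction as implicit), and complexity via the recursion $T(n) = 2T(n/2) + P(n)$ where $P(n)$ is the per-node cost of $\NegMinBasis$, with $\omega > 2$ forcing the root to dominate. Your explicit tracking of the row-count bound $k + 1$ on the returned $\vec\lambda_i$, which keeps the row dimension of $R$ proportional to the local sub-problem size $m$ rather than $n$, is the key fact the paper leaves tacit when asserting $P(n)$ can be taken to be the target cost.
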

\begin{proof}
  Correctness follows from \cref{lem:recursive_R_solves}.
  For complexity, note that the choice of order in \cref{line:recursive:choosed} is bounded by $2\max_i \deg g_i$, i.e. twice the value of $d$ of this theorem.
  So if $T(n)$ is the cost \cref{alg:recsimpade} for given $n$ and where the order will be bounded by $O(d)$, then we have the following recursion:
  \[
    T(n) = \left \{\begin{array}{ll}
                     2T(n/2) + P(n) & \textrm{if } n > 1 \\
                     O(\M(d)\log d) & \textrm{if } n = 1 \textrm{ (see \cref{sec:direct_reduced_basis})}
                   \end{array}\right . \ ,
  \]
  where $P(n)$ is the cost of line \ref{calltoNegMin}.
  Using algorithm \algoname{PopovBasis} for the computation
  of the negative part of the minimal approximant bases
  we can set $P(n)$ to the target cost.
  The recursion then implies $T(n) \in O(P(n))$.
\end{proof}

\jsrn{For final version: discuss the fact that the algorithm is faster when there's only few dimensions of solutions to the individual problems, i.e.~a speed similar to in \cite{olesh_vector_2006}.}

%


\jsrn{Why do we present two algorithms?}

\noindent
{\bf Acknowledgements.}
The authors would like to thank George Labahn for valuable
discussions, and for making us aware of the Hermite--Simultaneous
\Pade duality.  We would also like to thank Vincent Neiger for
making preprints of \cite{jeannerod_computation_2016} available to
us.  The first author would like to thank the Digiteo Foundation
for funding the research visit at Waterloo, during which most of
the ideas of this paper were developed.

\bibliographystyle{abbrv}
\bibliography{bibtex}

\begin{thebibliography}{10}

\bibitem{alekhnovich_linear_2005}
M.~Alekhnovich.
\newblock Linear {Diophantine} {Equations} {Over} {Polynomials} and {Soft}
  {Decoding} of {Reed}–{Solomon} {Codes}.
\newblock {\em {IEEE} Trans. Inf. Th.}, 51(7):2257--2265, 2005.

\bibitem{barel_general_1992}
M.~V. Barel and A.~Bultheel.
\newblock A general module theoretic framework for vector {M}-{Padé} and
  matrix rational interpolation.
\newblock {\em Numerical Algorithms}, 3(1):451--461, Dec. 1992.

\bibitem{beckermann_uniform_1992}
B.~Beckermann and G.~Labahn.
\newblock A uniform approach for {Hermite} {Padé} and simultaneous {Padé}
  approximants and their matrix-type generalizations.
\newblock {\em Numerical Algorithms}, 3(1):45--54, 1992.

\bibitem{beckermann_uniform_1994}
B.~Beckermann and G.~Labahn.
\newblock A {Uniform} {Approach} for the {Fast} {Computation} of
  {Matrix}-{Type} {Padé} {Approximants}.
\newblock {\em {SIAM} J. Matr. Anal. Appl.}, 15(3):804--823, July 1994.

\bibitem{beckermann_recursiveness_1997}
B.~Beckermann and G.~Labahn.
\newblock Recursiveness in matrix rational interpolation problems.
\newblock {\em J. Comp. App. Math.}, 77(1–2):5--34, Jan. 1997.

\bibitem{beckermann_fraction-free_2009}
B.~Beckermann and G.~Labahn.
\newblock Fraction-{Free} {Computation} of {Simultaneous} {Padé}
  {Approximants}.
\newblock In {\em Proc. of ISSAC}, pages 15--22, 2009.

\bibitem{BeckermannLabahnVillard06}
B.~Beckermann, G.~Labahn, and G.~Villard.
\newblock Normal forms for general polynomial matrices.
\newblock {\em J. Symb. Comp.}, 41(6):708--737, 2006.

\bibitem{bostan_solving_2008}
A.~Bostan, C.-P. Jeannerod, and E.~Schost.
\newblock Solving structured linear systems with large displacement rank.
\newblock {\em Th. Comp. Sc.}, 407(1–3):155--181, Nov. 2008.

\bibitem{CantorKaltofen}
D.~G. Cantor and E.~Kaltofen.
\newblock On fast multiplication of polynomials over arbitrary algebras.
\newblock {\em Acta Informatica}, 28(7):693--701, 1991.

\bibitem{chowdhury_faster_2015}
M.~Chowdhury, C.-P. Jeannerod, V.~Neiger, E.~Schost, and G.~Villard.
\newblock Faster {Algorithms} for {Multivariate} {Interpolation} {With}
  {Multiplicities} and {Simultaneous} {Polynomial} {Approximations}.
\newblock {\em {IEEE} Trans. Inf. Theory}, 61(5):2370--2387, May 2015.

\bibitem{clement_pernet_high_2014}
{Clément Pernet}.
\newblock {\em High {Performance} and {Reliable} {Algebraic} {Computing}}.
\newblock Nov. 2014.
\newblock Habilitation.

\bibitem{coppersmith_matrix_1990}
D.~Coppersmith and S.~Winograd.
\newblock Matrix {Multiplication} via {Arithmetic} {Progressions}.
\newblock {\em J. Symb. Comp.}, 9(3):251--280, 1990.

\bibitem{feng_generalization_1991}
G.-L. Feng and K.~K. Tzeng.
\newblock A {Generalization} of the {Berlekamp}-{Massey} {Algorithm} for
  {Multisequence} {Shift}-{Register} {Synthesis} with {Applications} to
  {Decoding} {Cyclic} {Codes}.
\newblock {\em {IEEE} Trans. Inf. Theory}, 37(5):1274--1287, 1991.

\bibitem{LeGall14}
F.~L. Gall.
\newblock Powers of tensors and fast matrix multiplication.
\newblock In {\em Proc. of ISSAC}, pages 296--303, 2014.

\bibitem{giorgi_complexity_2003}
P.~Giorgi, C.~Jeannerod, and G.~Villard.
\newblock On the {Complexity} of {Polynomial} {Matrix} {Computations}.
\newblock In {\em Proc. of ISSAC}, pages 135--142, 2003.

\bibitem{GuptaSarkarStorjohannValeriote11}
S.~Gupta, S.~Sarkar, A.~Storjohann, and J.~Valeriote.
\newblock Triangular $x$-basis decompositions and derandomization of linear
  algebra algorithms over {${\mathsf K}[x]$}.
\newblock {\em J. Symb. Comp.}, 47(4):422--453, 2012.

\bibitem{gustavson_fast_1979}
F.~Gustavson and D.~Yun.
\newblock Fast algorithms for rational {Hermite} approximation and solution of
  {Toeplitz} systems.
\newblock {\em IEEE Trans.~Circ.~Sys.}, 26(9):750--755, 1979.

\bibitem{hermite_sur_1878}
C.~Hermite.
\newblock Sur la {Formule} d'{Interpolation} de {Lagrange}.
\newblock {\em J. Reine und Angewandte Math.}, 84(1):70--79, 1878.

\bibitem{jeannerod_computation_2016}
C.-P. Jeannerod, V.~Neiger, E.~Schost, and G.~Villard.
\newblock Fast computation of minimal interpolation bases in {Popov} form for
  arbitrary shifts.
\newblock Submitted$\,$ to$\,$ ISSAC'16.

\bibitem{justesen_complexity_1976}
J.~Justesen.
\newblock On the complexity of decoding {Reed}-{Solomon} codes ({Corresp}.).
\newblock {\em IEEE Trans.~Inf.~Theory}, 22(2):237--238, Mar. 1976.

\bibitem{kailath_linear_1980}
T.~Kailath.
\newblock {\em Linear {Systems}}.
\newblock Prentice-Hall, 1980.

\bibitem{lenstra_factoring_1985}
A.~Lenstra.
\newblock Factoring {Multivariate} {Polynomials} over {Finite} {Fields}.
\newblock {\em J. Comp. Syst. Sc.}, 30(2):235--248, 1985.

\bibitem{Mahler68}
K.~Mahler.
\newblock Perfect systems.
\newblock {\em Compos.~Math}, 19:95--168, 1968.

\bibitem{mulders_lattice_2003}
T.~Mulders and A.~Storjohann.
\newblock On {Lattice} {Reduction} for {Polynomial} {Matrices}.
\newblock {\em J. Symb. Comp.}, 35(4):377--401, 2003.

\bibitem{neiger_fast_2016}
V.~Neiger.
\newblock Fast computation of shifted {Popov} forms of polynomial matrices via
  systems of modular polynomial equations.
\newblock Submitted to ISSAC'16.

\bibitem{nielsen_generalised_2013}
J.~S.~R. Nielsen.
\newblock Generalised {Multi}-sequence {Shift}-{Register} {Synthesis} using
  {Module} {Minimisation}.
\newblock In {\em Proc. of IEEE ISIT}, 2013.

\bibitem{olesh_vector_2006}
Z.~Olesh and A.~Storjohann.
\newblock The vector rational function reconstruction problem.
\newblock In {\em Proc. of WWCA}, pages 137--149, 2006.

\bibitem{schmidt_collaborative_2009}
G.~Schmidt, V.~Sidorenko, and M.~Bossert.
\newblock Collaborative {Decoding} of {Interleaved} {Reed}–{Solomon} {Codes}
  and {Concatenated} {Code} {Designs}.
\newblock {\em {IEEE} Trans. Inf. Theory}, 55(7):2991--3012, 2009.

\bibitem{sidorenko_linear_2011}
V.~Sidorenko and G.~Schmidt.
\newblock A {Linear} {Algebraic} {Approach} to {Multisequence}
  {Shift}-{Register} {Synthesis}.
\newblock {\em Prob.~Inf.~Trans.}, 47(2):149--165, 2011.

\bibitem{stein_sagemath_????}
W.~A. Stein et~al.
\newblock {SageMath} {Software}.
\newblock http://www.sagemath.org.

\bibitem{storjohann_high-order_2003}
A.~Storjohann.
\newblock High-order lifting and integrality certification.
\newblock {\em J. Symb. Comp.}, 36(3):613--648, 2003.

\bibitem{sugiyama_further_1976}
Y.~Sugiyama, M.~Kasahara, S.~Hirasawa, and T.~Namekawa.
\newblock Further {Results} on {Goppa} {Codes} and their {Applications} to
  {Constructing} {Efficient} {Binary} {Codes}.
\newblock {\em {IEEE} Trans. Inf. Theory}, 22(5):518--526, 1976.

\bibitem{von_zur_gathen_modern_2012}
J.~von~zur Gathen and J.~Gerhard.
\newblock {\em Modern {Computer} {Algebra}}.
\newblock Cambridge Univ. Press, 3rd edition, 2012.

\bibitem{zhou_efficient_2012}
W.~Zhou and G.~Labahn.
\newblock Efficient algorithms for order basis computation.
\newblock {\em J. Symb. Comp.}, 47(7):793--819, 2012.

\end{thebibliography}


\end{document}